\documentclass[12pt,leqno]{amsart}
\usepackage{amssymb,amsmath,amsthm}
\usepackage{graphicx}
\oddsidemargin=-.0cm
\evensidemargin=-.0cm
\textwidth=16cm
\textheight=22cm
\topmargin=0cm

%%%%%%%%%%%%%%%%%%%%%%%%%%%%%%%%%%%%%%%%%%%%
% DEFS
\def\e{{\rm e}}
\def\eps{\varepsilon}

\def\d{{\rm d}}

\def\dim{{\rm dim}}

\def\ddt{\frac{\d}{\d t}}

\def\AA {{\mathfrak A}}

    \def\LL {{\mathfrak L}}

\def\R {\mathbb{R}}
\def\N {\mathbb{N}}
\def\H {{\mathcal H}}
\def\V {{\mathcal V}}

\def\B {{\mathcal B}}
\def\C {{\mathcal C}}

\def\E {{\mathcal E}}
\def\A {{\mathcal A}}
\def\AA {{\mathbb A}}

\def\I {{\mathcal I}}

\def\LL {{\mathcal L}}
\def\Q {{\mathcal Q}}
\def\U {{\mathcal U}}

\def \l {\langle}
\def \r {\rangle}
\def \pt {\partial_t}
\def \ptt {\partial_{tt}}

\def \and{\quad\text{and}\quad}

%%%%%%%%%%%%%%%%%%%%%%%%%%%%%%%%%%%%%%%%%%%%

%%%%%%%%%%%%%%%%%%%%%%%%%%%%%%%%%%%%%%%%%%%%
% BIBLIOGRAPHY
\def \au {\rm}
\def \ti {\it}
\def \jou {\rm}

\def \no#1#2#3 {{\bf #1} (#3), #2.}
  %\no{Vol}{Pag}{Year}
\def \eds#1#2#3 {#1, #2, #3.}
  %\eds{Pub}{City}{Year}
%%%%%%%%%%%%%%%%%%%%%%%%%%%%%%%%%%%%%%%%%%%%

%%%%%%%%%%%%%%%%%%%%%%%%%%%%%%%%%%%%%%%%%%%%
\newtheorem{proposition}{Proposition}[section]

\newtheorem{theorem}{Theorem}[section]
\newtheorem{corollary}{Corollary}[section]
\newtheorem{lemma}{Lemma}[section]
\theoremstyle{definition}

\newtheorem{remark}{Remark}[section]
\newtheorem*{remark*}{Remark}
\newtheorem*{warn*}{A word of warning}

\numberwithin{equation}{section}

%%%%%%%%%%%%%%%%%%%%%%%%%%%%%%%%%%%%%%%%%%%%%%%%%

\title[Time-Dependent Attractor for the Oscillon Equation ]
{Time-Dependent Attractor for The Oscillon Equation}

\author[F.\ Di Plinio]
{Francesco Di Plinio}
\address{
Institute for Scientific Computing and Applied Mathematics
\newline\indent
Indiana University
\newline\indent
Bloomington, IN 47405 - USA}
\email{fradipli@indiana.edu {\rm (F.\ Di Plinio)} }
\email{temam@indiana.edu {\rm (R.\ Temam)} }

\author[G.\ S.\ Duane]
{Gregory S.\ Duane}

\address{
Rosenstiel School of Marine and Atmospheric Sciences
\newline\indent
University of Miami
\newline\indent
Miami, FL 33149
\newline\indent \vskip-3mm
Dept. of Atmospheric and Oceanic Sciences
\newline\indent
University of Colorado
\newline\indent
Boulder, CO 80309
}
\email{gregory.duane@colorado.edu {\rm (G.\ S.\ Duane)} }

\author[R.\ Temam]
{Roger Temam}

%%%%%%%%%%%%%%%%%%%%%%%%%%%%%%%%%%%%%%%%%%%%%%
\date{\today}
\begin{document}

\begin{abstract}We investigate the asymptotic behavior of the nonautonomous evolution problem generated by
the Oscillon equation
$$
\ptt u(x,t) +H \pt u(x,t) -\e^{-2Ht}\partial_{xx} u(x,t) +
V'(u(x,t)) =0, \qquad x\in (0,1), t \in \R,
$$
with periodic boundary conditions, where $H>0$ is the Hubble
constant and $V$ is a nonlinear potential of arbitrary polynomial
growth.  After constructing a suitable dynamical framework to deal
with the explicit time dependence of the energy of the solution,
we establish the existence  of a regular global attractor
$\A=\A(t)$. The kernel sections $\A(t)$ have finite fractal dimension.
\end{abstract}

\maketitle

\section{Introduction}

The present paper is devoted to the study of the nonautonomous evolution problem
generated by the equation
\begin{equation} \label{EQ-INTRO}
\ptt u(x,t) +H \pt u(x,t) -\e^{-2Ht}\partial_{xx} u(x,t) + V'(u(x,t)) =0,
\qquad x\in (0,1), t \in \R,
\end{equation}
where $V$ is a nonlinear potential of polynomial growth satisfying
natural dissipativity conditions. Equation \eqref{EQ-INTRO} is the
Klein-Gordon equation, with the given nonlinear potential, for a
scalar field on a manifold with a Robertson-Walker metric
corresponding to an expanding universe with Hubble constant $H>0$,
and is referred to here as the \emph{oscillon} equation.

 The physical motivation for the mathematical development in this paper stems from
a role of Equation (\ref{EQ-INTRO}) in recent cosmological
theories. It has been suggested that long-lived, localized,
oscillating solutions to the equations of the standard model of
particle physics may be useful in breaking thermodynamic
equilibrium.
\begin{figure}
\begin{center}
\includegraphics[width=\textwidth]{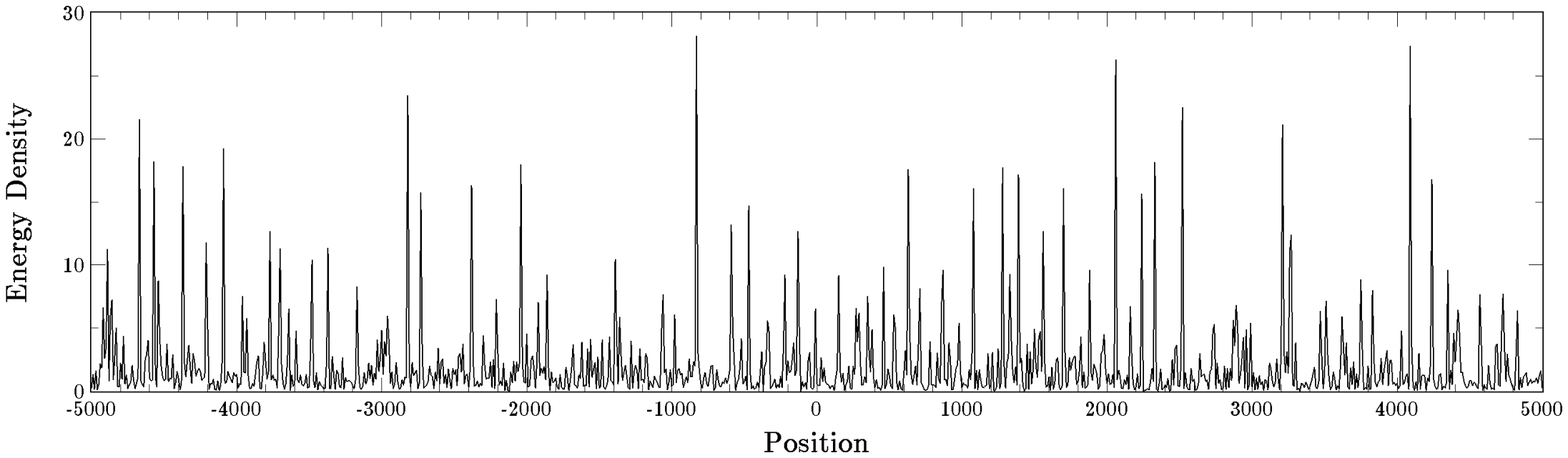}
\includegraphics[width=\textwidth]{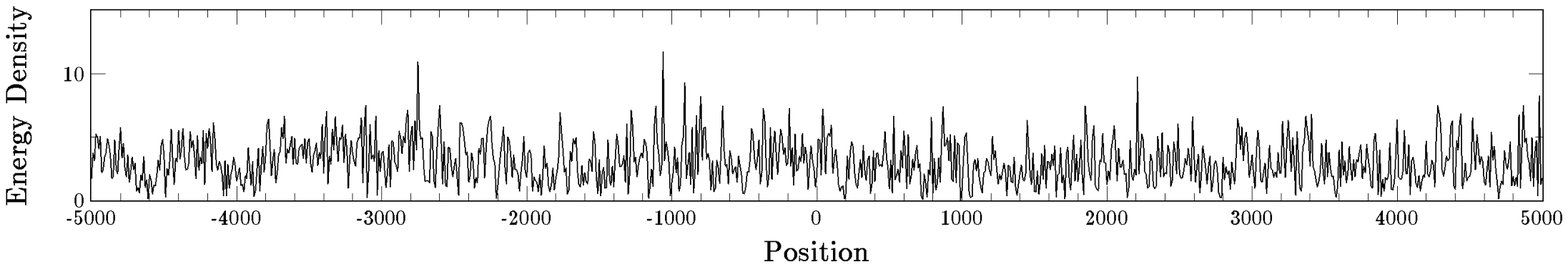}
\caption{Energy density vs. position $x$ for a numerical
simulation of (\ref{EQ-INTRO}) with periodic boundary conditions
and $V(y)=\frac12 y^2 - \frac14y^4+ \frac16y^6$, suggesting
localized oscillations   (top), and a simulation of the same
equation, but with a different potential $V(y)= \frac12 y^2 +
\frac14y^4+ \frac16y^6$, for which oscillons do not occur, shown
for comparison (bottom). Oscillons occur for the first potential
and not the second for the following physical reason: the $-y^4$
term causes large amplitude perturbations to see a flatter
potential and thus oscillate at lower frequency. The low-frequency
oscillations (oscillons) decouple from the higher-frequency,
lower-amplitude travelling waves that would otherwise cause them
to dissipate. (The $y^6$ term is included to maintain a positive
potential and avoid unbounded oscillations). (from a 9/26/05
presentation to DOE by A.\ Guth, with E.\ Farhi, R.\ Rosales, N.\
Graham, A.\ Scardicchio, and R. Stowell)} \label{figoscillon}
\end{center}
\end{figure}
%Baryogenesis
%requires that the universe depart from thermodynamic equilibrium, because
%CPT symmetry would otherwise require equal amounts of matter and antimatter.
Long-lived ``oscillons" have been found to occur in a numerical
simulation of the simplified model considered here, in one space
dimension,   for a nonlinear potential $V$ of appropriate form
\cite{FGKet.al.}, as shown in Fig.\ \ref{figoscillon}.  Oscillons
have also been studied in the context of other simple models of
early universe phase transitions \cite{OS1,OS4,OS2,OS3}. These
coherent structures seem to capture the essential features of the
oscillon phenomenon in simulations of the full standard model in
three-dimensional space.   Localized oscillating solutions had
been previously found in models of vibrating granular media
\cite{Um}. In the cosmological context, the external forcing is
replaced by thermalization of initial conditions, and the
dissipation is scale-independent. The ``friction" term, containing
the first time-derivative of the field, results from the
expansion.

The structures seen in Fig.\ \ref{figoscillon}a are
localized, low-frequency oscillations.  The equation is written
in a coordinate system in which the expansion of the universe is not apparent.
Rather, the oscillons, which are of constant physical width, steadily decrease
in width in the chosen units.
%Initial small-scale fluctuations thus become
%increasingly important over time (as they expand physically), in contrast
%to typical physical systems represented by PDEs in a fixed spatial geometry.
The system (\ref{EQ-INTRO}) is formally Hamiltonian, with
time-dependent Hamiltonian density
\begin{equation}
\label{Ham}
{\H }(t) = \frac12 \e^{-Ht} [(\partial_x u)^2 + \pi^2] + \e^{Ht} V(u),
\end{equation}
where $\pi \equiv \e^{Ht} u$ is the canonical momentum that is conjugate to $u$.
That is, we can write the canonical equations:
$$
\dot u=\frac{\partial{\H}}{\partial\pi}, \qquad   \\
\dot\pi=-\frac{\partial{\H}}{\partial u}.
$$
 Liouville's theorem, which applies even when the Hamiltonian is
explicitly time-dependent, precludes the collapse of the whole phase
space to a lower-dimensional manifold, so that a standard
attractor cannot exist for the oscillon equation.  This paper demonstrates that, by suitably adapting the notions of dissipativity and of \textit{basin of attraction}, an alternative construct, the {\it pullback
attractor}, is able to capture  the intuitive notion that oscillons define a
restricted portion of the full phase space.

Well-posedness  of \eqref{EQ-INTRO}, supplemented with ordinary
(e.g.\ periodic or Dirichlet) boundary conditions,  is classical.
In fact,  to this regard   \eqref{EQ-INTRO} can be seen as  a
nonlinear damped wave equation in space dimension 1 with
time-dependent speed of propagation $\e^{-Ht}$. It is easy to see
that it generates a nonautonomous dynamical system
(\emph{process})  on the  energy phase space $(u,\pt u) \in
H^1(0,1) \times L^2(0,1)$ (see references). Therefore, the main
concern of the paper is the study of the dissipative properties
and asymptotic behavior of the process. In particular, we are
interested in   the construction of some sort of attractor, i.e.\
a ``thin'' (compact, possibly finite-dimensional) subset of the
phase space which is invariant under the process and  embodies the
long-term dynamics of the system.

While the theory of attractors for autonomous systems is
well-established (see \cite{BV,HAL,Har,LAD,TEM} for theoretical
background and classical applications), there is less common
ground in the nonautonomous case. The \emph{uniform attractor}
approach, initiated by Haraux \cite{Har}, and further developed by
Chepyzhov and Vishik \cite{CV1,CV2}, relies essentially on the
compactness of the nonautonomous terms in the equation, and
therefore is not applicable in our case.

On the contrary, the framework of pullback attractors, as
developed in \cite{Cra1,Cra2,Schm} (see also \cite{CLV,CLR,CKB}
for more recent investigations), does not pose essential
restrictions (for example, translation boundedness or translation
compactness) on the nonautonomous terms and allows for
time-dependent limit objects (absorbing families and attractors).
Roughly speaking, a pullback attractor $\A=\{\A(t)\}$ is a
(time-dependent) family of compact sets which is invariant under
the solution operator and attracts all solutions originating
sufficiently far in the past. The set $\A(t)$    describes the
\emph{regime} of the system at time $t$; the solutions starting
sufficiently early have forgotten the initial data and, thanks to
the invariance property, their future evolution is well described
by $\A$. In contrast to the uniform attractor framework, the sets
$\A(t)$ may very well  not be uniformly bounded as $t \to
+\infty$.

For our system,  the lack of dissipation for the  natural energy
$$
\E(t)=\e^{-2Ht} \int_0^1 |\partial_x u(x,t)|^2 \, \d x +   \int_0^1 |\pt u(x,t)|^2 \, \d x,
$$
due to the  singularities of the speed of propagation  for $t \to
\pm \infty$ and to the structure of the equation, prevents the
existence of a pullback absorbing set in the usual sense. For
example, in  the linear homogeneous  case $(V=0)$, one has the
conservation law
$$
\e^{2Ht}\E(t)= \e^{2Hs}\E(s),\;t,s \in \R,
$$
so that $\|\partial_ x u(t) \|_{L^2}$ approaches a constant
depending on the size of the data at time $s$, as $t \to +\infty$.

To circumvent these issues, we adopt a new point of view on
pullback dissipativity. \emph{The main idea is to restrict the
basin of attraction to those families of sets of the phase space
whose (time-dependent) energy $\E(t)$, dictated by the problem, is
bounded as time goes to  $- \infty.$} To this purpose, we describe
the solution operator as a family of maps acting on a
time-dependent family of spaces  $X_t$. In  our problem, the
spaces are all the same linear space   with the norms
$\|\cdot\|_{X_t}$, $\|\cdot\|_{X_s}$ equivalent for any fixed
$t,s$. However, as it will be clear below, this equivalence blows
up as we let $s,t \to \pm \infty$.
\subsection*{Plan of the paper}
In the subsequent Section 2, we describe an abstract framework for
dynamical processes on time-dependent spaces and  provide the main
existence result for  time-dependent global attractors.

In Section 3, we formulate the evolution problem generated by eq.
\eqref{EQ-INTRO} and provide the main dissipative estimate of the
paper. Section 4 is devoted to the construction of the
time-dependent attractor for the process corresponding to eq.
\eqref{EQ-INTRO}. In Section  5 we briefly describe the structure of the
attractor and some forward convergence results in the case of the  potentials coming from  the physics literature. In particular, possible nontriviality of the pullback attractor for very flat potentials, discussed in Subsection 5.1, points to a way in which the oscillon behavior of Figure \ref{figoscillon} might be explained in future extensions of the present work.  Finally, Section 6 is dedicated to establishing finite dimensional reduction on the time-dependent attractor.

\section{Attractors in Time-Dependent Spaces.} \label{sect:attractors}
As roughly described in the introduction, we need to modify the
classical framework of pullback attractors in order to handle
evolution problems, like \eqref{EQ-INTRO}, where the nonautonomous
terms appear at a \emph{functional} level, even acting on the
space derivatives of the highest order, and not merely as an
external time-dependent forcing. Therefore, the two parameter
solution operator will be described by  a family of maps acting on
a one parameter family of spaces $X_t$, which we continue to call
process.

\subsection*{Process.} For $t \in \R$, let $X_t$ be a family of Banach spaces.
% with bicontinuous linear bijections \label{def:process}
%$
%X_s \hookrightarrow X_t
%$, $s,t \in \R$
%(i.e.\ for any $s,t\in \R$, $X_s$ and $X_t$ are the same linear space and the norms $\|\cdot \|_{X_s}$ and $\|\cdot \|_{X_t}$ are equivalent).
A (continuous) \emph{process}  is a two-parameter family of
mappings $\{S(t,s): X_s \to X_t\}_{s \leq t}$ with properties
\begin{itemize}
  \item[(i)] $S(t,t) = \textrm{Id}_{X_t}$;
  \item[(ii)] $S(t,s) \in \C(X_s, X_t)$;
   \item[(iii)] $S(\tau,t)S(t,s) = S(\tau,s)$ for $s \leq t \leq \tau.$
\end{itemize}

In the concrete case examined in \S3-\S5, the spaces  $X_t$ are
all the same linear space with the norms $\|\cdot\|_{X_t}$,
$\|\cdot\|_{X_s}$ equivalent for any fixed $t,s$, whereas the
equivalence blows up as we let $s, t \to \pm \infty$. However,
this is not needed for most of the theory we develop hereafter,
and the present framework can handle evolution problems in which
the spaces $\{X_t\}$ are completely unrelated.

\subsection*{Pullback-bounded family.} A family of subsets $\B=\{\B(t) \subset X_t\}_{t \in \R}$
is \emph{pullback-bounded} if\footnote{Here, for $D$ subset of a
Banach space $X$, $\displaystyle \|D\|_{X} = \sup_{z \in D}
\|z\|_{X}$}
$$
R(t)=\sup_{s\in (-\infty, t]}\|\B(s) \|_{X_s} < \infty \qquad \forall t \in \R,
$$
i.e. the sets $\B(t)\subset X_t$ are bounded for all times and $\|\B(s) \|_{X_s}$ is bounded as $s \to -\infty$.
\vskip2mm

\subsection*{Pullback absorber.} A  pullback-bounded family  $\AA=\{\AA(t)\}$ is called \emph{pullback absorber} if for every pullback-bounded family $\B$ and for every $t \in \R$ there exists $t_0=t_0(t)\leq t$ such that
$$
S(t,s) \B(s) \subset \AA(t), \qquad \forall s \leq t_0.
$$
\vskip2mm
  Although the restriction of the basin of attraction to parametrized families of sets  has been employed before in the standard framework of pullback attractors (see for example \cite{CL,CLR,CLR0}), our definitions have a stronger physical connotation. It seems physically reasonable to assume that the collection of ``bounded sets'', which usually constitutes the basin of absorption, or attraction of a global attractor, contains only those families with bounded energy, as dictated by the problem.

\subsection*{Time-dependent global attractor.} A  family of compact subsets  $\A=\{\A(t) \subset X_t\}_{t \in \R}$ is called  \label{def:attractor}\emph{time-dependent global attractor} if it fulfills the following  properties:
\begin{itemize}
  \item[(i)](invariance) $S(t,s) \A(s) = \A(t)$, for every $s \leq t$ ;
  \item[(ii)] (pullback attraction) for every pullback-bounded family $\B$ and every $t \in \R$, \footnote{
For a Banach space $X$ and $A,B \subset X$, the Hausdorff semidistance is defined as
$$
\textrm{dist}_{X} (A,B) = \sup_{x \in A} \inf_ {y \in B} \|y-x\|_X.
$$
From the definition, $\textrm{dist}_{X} (A,B) =0$ if and only if $A$ is contained in the closure of $B$.
}   $$ \displaystyle \lim_{s \to - \infty } \textrm{dist}_{X_t} (S(t,s)\B(s),\A(t)) =0.$$
\end{itemize}
If property (ii) holds uniformly wrt $t \in \R$, $\A$ is a \emph{uniform} time-dependent global attractor.
%If, moreover,
%\begin{itemize}
%  \item[(iii)] for every $s$ and every bounded set $B \subset X_s$,
%  $ \displaystyle \;
%  \lim_{t \to \infty} \dist_{X_t}(S(t,s)B,\A(t)) =0
%  $
%\end{itemize}
%holds, $\A$ has the \emph{forward attraction property} as well.
\begin{remark} \label{uniqueness} In general, conditions (i)-(ii) are not sufficient for uniqueness of the time-dependent attractor \textbf{(see \cite{MZ} for a discussion on this issue and examples)}.  However, if we require in addition
\begin{itemize}
  \item[(iii)] $\A$ is a pullback-bounded family,
\end{itemize}
then there exists at most one family satisfying (i)-(iii), i.e. a
pullback-bounded time-dependent global attractor is unique in the
class of pullback-bounded families. Indeed, assume that $\A$ and
$\A'$ are two pullback-bounded time-dependent global attractors
for the process $S(\cdot,\cdot)$. Fix $t \in \R$ and observe that
(ii) implies
$$
\textrm{dist}_{X_t} (S(t,s)\A'(s),\A(t)) \to 0, \qquad s \to
-\infty.
$$
But $S(t,s) \A'(s) = \A'(t)$, so that $\A'(t) \subset \A(t)$, since $\A(t)$ is closed. Exchanging the roles of $\A(t)$ and $\A'(t)$ we get the reverse inclusion as well, so that $\A(t)=\A'(t),$ and finally $\A=\A'$.

\textbf{To obtain unconditional (i.e.\ without assuming (iii)) uniqueness of the time-dependent global attractor, one has to rule out the existence of families of compact, invariant, pullback-attracting sets which are not pullback-bounded: this can be done by establishing that  the process has the backward boundedness property.}
\end{remark}
\subsection*{Time-dependent $\omega$-limit.} Given a  family of sets $\B$, its time-dependent $\omega$-limit is the family
 $\omega_\B=\{\omega_\B(t) \subset X_t\}_{t \in \R}$, where $\omega_\B(t)$ is defined as
$$
\omega_\B(t)=\bigcap_{\tau \leq t} \overline{\bigcup_{s \leq \tau} S(t,s) \B(s)},
$$
and the above closure is  taken in $X_t$. An equivalent characterization is the following:
$$
\omega_\B(t) = \{z \in X_t : \exists\, s_n \to - \infty, z_n \in \B(s_n) \textrm{ with } \|S(t,s_n)z_n -z \|_{X_t} \to 0 \textrm{ as } n \to \infty\}.
$$

\begin{warn*} In the remainder of the paper, we will generally omit the phrase \emph{time-dependent} and hence say
\emph{global attractor} for  \emph{time-dependent global
attractor}.
\end{warn*}
\subsection*{Existence of the global attractor}
This subsection is devoted to a result of existence of a (time-dependent) global attractor.
We will make use of the Kuratowski measure of noncompactness: if $X$ is a Banach space and $D \subset X$,
$$
 \alpha(D) = \inf\{ d>0 : D \textrm{ has a finite cover of balls of } X \textrm{ of radius less than } d\}.
$$
Hereafter we recall some  properties of the Kuratowski measure we
are going to use, redirecting to \cite{HAL} for more details and
proofs:
\begin{itemize}
\item[(K.1)] $\alpha(D)=0$ if and only if $A$ is compact in $X$;
\item[(K.2)] $D_1 \subset D_2$ implies $\alpha(D_1) \leq \alpha(D_2)$;
\item[(K.3)] $\alpha(D)=\alpha(\overline{D})$;
\item[(K.4)] if $r_0 \in \R$ and $\{\U_r\}_{r \geq r_0}$ is a family of nonempty closed subsets of $X$ such that $$\U_{r_2} \subset \U_{r_1} \quad \forall r_2 > r_1 \geq r_0, \qquad \lim_{r \to +\infty} \alpha(\U_r)=0,$$ then $\U= \cap_{r \geq r_0} \U_r$ is nonempty and compact;
\item[(K.5)] let $\{\U_r\}_{r \geq r_0}$  be as in (K.4) and let  any two sequences  $r_n \to +\infty$,   $x_n \in \U_{r_n}$ be given; then $x_n$ possesses a subsequence converging to some $x \in \U$.
\end{itemize}

\begin{remark}
The shorthand $\alpha_t$ stands for the Kuratowski measure in the space $X_t$. We remark that, for fixed $s,t\in \R$, $\alpha_s$ and $\alpha_t$ are equivalent measures of noncompactness whenever  there is a Banach space isomorphism between $X_s$ and $ X_t$.
\end{remark}
\begin{theorem} \label{existence}
Assume that the process $S(\cdot,\cdot)$ possesses an absorber $\AA$ for which
\begin{equation} \label{alphaa}
\lim_{s \to  -\infty} \alpha_t(S(t,s)\AA(s)) =0, \qquad \forall t \in \R.
\end{equation}
Then, $\omega_\AA$
is a global attractor for $S(\cdot,\cdot).$
\end{theorem}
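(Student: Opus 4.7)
Fix $t \in \R$ and, for each $\tau \leq t$, introduce the decreasing family of nonempty closed sets
\[
V(\tau) := \overline{\bigcup_{s \leq \tau} S(t,s)\AA(s)}^{X_t},
\]
so that $\omega_\AA(t) = \bigcap_{\tau \leq t} V(\tau)$. The plan is the classical three-step scheme adapted to time-dependent spaces: (a) establish $\alpha_t(V(\tau)) \to 0$ as $\tau \to -\infty$; (b) deduce nonemptiness, compactness, and pullback attraction of $\omega_\AA(t)$ via (K.4)-(K.5); (c) prove invariance using the continuity property (ii) together with the semigroup identity (iii).

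The main obstacle is step (a), since hypothesis \eqref{alphaa} only controls the Kuratowski measure of each individual set $S(t,s)\AA(s)$, not of the whole union. The resolution is a bootstrap combining the semigroup identity with the fact that $\AA$ absorbs itself. Given $\tau \leq t$, let $\tau_* = t_0(\tau)$ be the time supplied by the absorber property applied to the family $\B = \AA$, so that $S(\tau,r)\AA(r) \subset \AA(\tau)$ for all $r \leq \tau_*$. Then property (iii) yields
\[
S(t,r)\AA(r) = S(t,\tau)\,S(\tau,r)\AA(r) \subset S(t,\tau)\AA(\tau), \qquad r \leq \tau_*,
\]
so that $V(\tau_*) \subset \overline{S(t,\tau)\AA(\tau)}$. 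By (K.2)-(K.3), $\alpha_t(V(\tau_*)) \leq \alpha_t(S(t,\tau)\AA(\tau))$, which tends to $0$ as $\tau \to -\infty$ by \eqref{alphaa}; monotonicity of $\tau' \mapsto \alpha_t(V(\tau'))$ then promotes this to the full limit along $\tau' \to -\infty$.

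Step (b) is essentially automatic: property (K.4) yields that $\omega_\AA(t)$ is nonempty and compact in $X_t$, and (K.5) together with the monotonicity of $V(\tau)$ forces $\dist_{X_t}(V(\tau),\omega_\AA(t)) \to 0$ as $\tau \to -\infty$. Pullback attraction follows because, for any pullback-bounded family $\B$ and any $\tau \leq t$, the absorber property puts $S(\tau,s)\B(s)$ into $\AA(\tau)$ for $s$ sufficiently negative, so that $S(t,s)\B(s) \subset S(t,\tau)\AA(\tau) \subset V(\tau)$ and $\tau$ may be chosen arbitrarily far in the past. For the invariance step (c), the inclusion $S(t,s)\omega_\AA(s) \subset \omega_\AA(t)$ is a direct consequence of the continuity of $S(t,s)\colon X_s \to X_t$ and the sequential characterization of the $\omega$-limit: if $z_n \in \AA(s_n)$, $s_n \to -\infty$, and $S(s,s_n)z_n \to z \in \omega_\AA(s)$, then $S(t,s_n)z_n = S(t,s)S(s,s_n)z_n \to S(t,s)z$. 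For the reverse inclusion, given $w \in \omega_\AA(t)$ represented by $S(t,s_n)w_n \to w$ in $X_t$ with $w_n \in \AA(s_n)$ and $s_n \to -\infty$, factor $S(t,s_n) = S(t,s)S(s,s_n)$ for $n$ large enough that $s_n \leq s$; applying (K.5) to the analogous family constructed at time $s$ (which satisfies the same $\alpha$-contraction by the argument of step (a)) extracts a subsequence of $S(s,s_n)w_n$ converging to some $z \in \omega_\AA(s)$, and continuity of $S(t,s)$ identifies $w = S(t,s)z$.
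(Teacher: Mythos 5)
Your proof is correct and follows essentially the same route as the paper's: the key bootstrap --- using the fact that $\AA$ absorbs itself together with the semigroup identity to bound the Kuratowski measure of the union tails by $\alpha_t(S(t,\tau)\AA(\tau))$, then invoking (K.4)--(K.5) for compactness and attraction and continuity of $S(t,s)$ for invariance --- is exactly the paper's argument. The only differences are organizational (you prove attraction directly via $\dist_{X_t}(V(\tau),\omega_\AA(t))\to 0$ rather than by the paper's contradiction with a diagonal subsequence, and you streamline the forward-invariance inclusion by skipping the paper's extra subsequence extraction), and they do not change the substance.
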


Before the proof of Theorem \ref{existence}, we make some remarks and derive two useful corollaries.

\begin{remark} We can compare Theorem \ref{existence} with the previous literature by examining the case $X_t = X$ (the Banach space is fixed), where $\A$ is the usual pullback (global) attractor. The existence result in e.g.\ \cite{Cra1} relies on the compactness
of the process, which is possible for parabolic
systems in a bounded domain. For hyperbolic systems, we mention  a criterion based on the \emph{pullback asymptotic compactness} of the solution operator  (\cite{CLR}, see also \cite{MRW} for a similar construction in the uniform case), successively employed for example in \cite{WAN}. In \cite[Theorem 3.10]{SCD}, the authors provide an abstract theorem which relies on the $\alpha$-contractivity of the process (a property which implies \eqref{alphaa}), though in their setting a nested pullback absorber (i.e. $\AA (s) \subset \AA(t)$ when $s \leq t$) is needed.   \end{remark}

\begin{remark} Taking advantage of the same techniques used in \cite{PZel2} for the construction of a global attractor for closed semigroups, the strong continuity assumption $S(t,s) \in \C(X_s,X_t)$ can be relaxed to the weaker
\begin{itemize}
\item[(iii)'] $S(t,s) : X_s \to X_t$ is a closed operator: given a convergent sequence $x_n \to x$ in $X_s$, if $S(t,s)x_n \to y$ in $X_t$, then $y=S(t,s)x.$
\end{itemize}
In particular, (iii)' holds whenever $S(t,s)$ is norm-to-weak continuous.
\end{remark}

The following corollary is a (more concrete) reformulation of
Theorem~\ref{existence}.
\begin{corollary}\label{cor:attractor}If the process $S(\cdot,\cdot)$ with absorber $\AA$ possesses a decomposition
$$
S(t,s) \AA(s) = P(t,s) + N(t,s)
$$
where
$$
\lim_{s \to -\infty } \|P(t,s) \|_{X_t} =0, \qquad \forall t \in \R
$$
and $N(t,s)$ is a compact subset of $X_t$ for all $t \in \R$ and $s \leq t$,
then  $\A(t)=\omega_\AA(t)$ is a global attractor for the process $S(\cdot, \cdot)$ .
\end{corollary}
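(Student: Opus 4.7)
My plan is to reduce Corollary \ref{cor:attractor} directly to Theorem \ref{existence}: since $\AA$ is already assumed to be an absorber, the only thing to verify is the Kuratowski-measure condition \eqref{alphaa}, namely $\alpha_t(S(t,s)\AA(s))\to 0$ as $s\to -\infty$ for each $t\in\R$. The hypothesized decomposition is tailored for exactly this, and the argument ought to boil down to a short estimate on $\alpha$.

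The computation rests on two standard properties of the Kuratowski measure which, although not on the list (K.1)--(K.5), follow immediately from the definition in the text and may be taken for granted (see, e.g., \cite{HAL}): (a) if a bounded set $D\subset X$ satisfies $\|D\|_X\leq r$, then $\alpha(D)\leq r$, since $D$ is then covered by a single ball of radius $r+\varepsilon$ centered at the origin; (b) for bounded subsets $A,B\subset X$, the Minkowski sum satisfies $\alpha(A+B)\leq \alpha(A)+\alpha(B)$, because finite covers of $A$ and $B$ by balls of radii $r_A$ and $r_B$ produce, by pairwise addition of centers, a finite cover of $A+B$ by balls of radius $r_A+r_B$.

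Interpreting $P(t,s)+N(t,s)$ as a Minkowski sum, monotonicity (K.2) combined with (b) gives
$$
\alpha_t(S(t,s)\AA(s))\leq \alpha_t(P(t,s))+\alpha_t(N(t,s)).
$$
The first term is bounded by $\|P(t,s)\|_{X_t}$ via (a) and tends to $0$ by hypothesis; the second is zero because $N(t,s)$ is compact, by (K.1). Letting $s\to -\infty$ yields \eqref{alphaa}, so Theorem \ref{existence} applies and $\omega_\AA$ is the global attractor.

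There is no genuine obstacle: the corollary merely packages Theorem \ref{existence} in a form where the noncompactness hypothesis is verified via a concrete splitting of the trajectory applied to the absorber into a small-in-norm part and a compact part. The only point requiring even mild care is subadditivity of $\alpha$ under Minkowski sums, and this is entirely routine.
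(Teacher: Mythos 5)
Your argument is correct and is exactly the intended derivation: the paper states Corollary~\ref{cor:attractor} without proof as a ``more concrete reformulation'' of Theorem~\ref{existence}, the point being precisely that the decomposition forces \eqref{alphaa} via subadditivity of $\alpha_t$ under Minkowski sums, $\alpha_t(N(t,s))=0$ by (K.1), and $\alpha_t(P(t,s))\leq \|P(t,s)\|_{X_t}\to 0$. Your reading of $P(t,s)+N(t,s)$ as a Minkowski sum matches the paper's later use of the corollary (where $P(t,s)=\bigcup_z P_z(t,s)$, $N(t,s)=\bigcup_z N_z(t,s)$), so nothing is missing.
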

Before proving Theorem~\ref{existence}, we dwell on  the  regularity and uniqueness of the  global attractor in a further corollary.
\begin{corollary}\label{cor:regularity}
Let $Y_t$ be a further family of Banach spaces satisfying, for
every $t \in \R$,\footnote{With $Y\Subset X$ we indicate compact
injection of the Banach space $Y$ into the Banach space $X$.}
\begin{itemize}
\item[$\cdot$] $Y_t \Subset X_t$;
\item[$\cdot$]  denoting with  $\I_t: Y_t \to X_t$ the canonical injection, the maps $\I_s$ are equibounded for $s \leq t$, i.e. $\sup_{s \leq t} \|\I_s\|_{\mathcal L(Y_s,X_s)} = C(t) < \infty$;
\item[$\cdot$]  closed balls of $Y_t$ are closed$\;$\footnote{For example, this holds when $Y_t$ is reflexive and compactly embedded into $X_t$.} in $X_t$ .
%  and closed balls of $Y_t$ are closed in $X_t$ (e.g.\ when $X_t$ is reflexive).
\end{itemize}
Under the same assumptions as in  Corollary~\ref{cor:attractor},
if in addition
$$
\sup_{s \in (-\infty,t]} \|N(t,s)\|_{Y_t} = h(t) < \infty \qquad \forall t \in \R,
$$
then the global attractor $\A$ is a   pullback-bounded family, and
henceforth unique in the sense of Remark \ref{uniqueness}.
Furthermore, it satisfies
$$\|\A(t)\|_{Y_t} \leq h(t)\qquad \forall t \in \R.$$
\end{corollary}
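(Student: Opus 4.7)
The plan is to establish $\|\A(t)\|_{Y_t}\le h(t)$ via a closedness argument on the decomposition $S(t,s)\AA(s) = P(t,s) + N(t,s)$, and to obtain pullback-boundedness of $\A$ from the general observation that any $\omega$-limit of a pullback absorber is itself absorbed by it. Uniqueness is then immediate from Remark~\ref{uniqueness}.

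First, fix $t\in\R$ and pick any $z\in \A(t) = \omega_\AA(t)$. The sequential characterization of $\omega_\AA(t)$ provides sequences $s_n\to -\infty$ and $z_n\in\AA(s_n)$ with $S(t,s_n)z_n\to z$ in $X_t$. Invoking the decomposition of Corollary~\ref{cor:attractor}, write $S(t,s_n)z_n = p_n + n_n$ with $p_n\in P(t,s_n)$ and $n_n\in N(t,s_n)$. Since $\|P(t,s_n)\|_{X_t}\to 0$, the terms $p_n$ converge to $0$ in $X_t$, and hence $n_n = S(t,s_n)z_n - p_n \to z$ in $X_t$.

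The crucial step is upgrading this $X_t$-convergence to a $Y_t$-bound on $z$: by assumption $\|n_n\|_{Y_t}\le h(t)$, so the sequence $(n_n)$ lies in the closed $Y_t$-ball $\overline{B}_{Y_t}(0,h(t))$. The third hypothesis on the scale $Y_t\Subset X_t$ states precisely that this ball is closed in $X_t$, so the $X_t$-limit $z$ also belongs to it, giving $\|z\|_{Y_t}\le h(t)$. As $z\in \A(t)$ was arbitrary, we obtain the regularity bound $\|\A(t)\|_{Y_t}\le h(t)$. This closedness step is the only nontrivial ingredient in the whole proof, and is exactly the reason the third hypothesis on the scale $Y_t\Subset X_t$ is imposed.

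For pullback-boundedness, I would argue independently of $h$ as follows. For each $s\in\R$, the absorbing property of $\AA$ provides some $\tau_0 = \tau_0(s)\le s$ with $S(s,\sigma)\AA(\sigma)\subset\AA(s)$ for all $\sigma\le\tau_0$. Plugging into the very definition of $\omega_\AA(s)$,
$$
\A(s)=\omega_\AA(s)\subset \overline{\bigcup_{\sigma\le\tau_0}S(s,\sigma)\AA(\sigma)}^{\,X_s}\subset \overline{\AA(s)}^{\,X_s},
$$
hence $\|\A(s)\|_{X_s}\le\|\AA(s)\|_{X_s}$. Since the absorber $\AA$ is pullback-bounded by definition, so is $\A$, and uniqueness in the sense of Remark~\ref{uniqueness} follows at once. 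The equiboundedness hypothesis on the injections $\I_s$ is not strictly needed for the above, but would enter if one instead wished to quantify pullback-boundedness through the $Y_s$-bound $h(s)$ via $\|\A(s)\|_{X_s}\le C(t)\,h(s)$ for $s\le t$.
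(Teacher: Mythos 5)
Your argument for the regularity bound $\|\A(t)\|_{Y_t}\le h(t)$ is exactly the paper's: write $S(t,s_n)z_n=p_n+n_n$, observe that $n_n\to z$ in $X_t$ while staying in the closed $Y_t$-ball of radius $h(t)$, and invoke the third hypothesis to conclude that $z$ lies in that ball. Where you genuinely diverge is the pullback-boundedness of $\A$. The paper deduces it \emph{from} the $Y_t$-bound just established: the equiboundedness of the injections $\I_s$ gives $\|\A(s)\|_{X_s}\le C(t)h(s)\le C(t)h(t)$ for $s\le t$ (using that $h$ is increasing), whence the supremum over $s\le t$ is finite. You instead prove it directly from the structure of the $\omega$-limit: since $\AA$ is itself a pullback-bounded family, it absorbs itself, so $\omega_\AA(s)\subset\overline{\AA(s)}$ and hence $\|\A(s)\|_{X_s}\le\|\AA(s)\|_{X_s}$, and pullback-boundedness of $\A$ is inherited from that of $\AA$. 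Your route is correct and more elementary: it makes no use of the second hypothesis (equiboundedness of the $\I_s$) nor of the monotonicity of $h$, and it would give pullback-boundedness of $\omega_\AA$ under the bare assumptions of Theorem~\ref{existence}. What the paper's route buys in exchange is a quantitative $X_s$-bound expressed through $h$, consistent with the regularity statement; your closing remark identifies this trade-off accurately.
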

\begin{proof} Fix $t \in \R$  and $z \in \A(t)$. By definition, there exists  sequences $s_n \to - \infty$, $z_n \in \AA(s_n)$ such that $
\|S(t,s_n) z_n - z\|_{X_t} \to 0$ as $n \to \infty$. Using the decomposition of Corollary~\ref{cor:attractor},
$$
S(t,s_n)z_n = P_{z_n}(t,s_n) + N_{z_n}(t,s_n),
$$
with $ P_{z_n}(t,s_n) \in P(t,s_n),$ and $N_{z_n}(t,s_n)  \in N(t,s_n)$. In particular $\|N_{z_n}(t,s_n)\|_{Y_t} \leq h(t)$, i.e.\ the sequence $N_{z_n}(t,s_n)$ is contained in the  closed ball of  $Y_t$ with  radius $h(t)$, which we call $B_t$.  Now, using $\|P(t,s) \|_{X_t} \to 0$ as $s\to - \infty $
$$
\|N_{z_n}(t,s_n) - z\|_{X_t} \leq \|S(t,s_n) z_n - z\|_{X_t} + \|P_{z_n}(t,s_n)\|_{X_t} \to 0, \qquad n \to \infty.
$$
Therefore, $z$ is an accumulation point of $B_t$ (in the topology of $X_t$). By assumption, $B_t$ is closed in $X_t$, so that $z \in B_t$ as well.

This establishes that $\A(t) \subset B_t$, i.e.\  $\|\A(t)\|_{Y_t} \leq h(t)$ for every $t \in \R$, which in turn yields that $\A$ is a pullback-bounded family in $Y_t$. The second assumption yields the existence of $C=C(t)>0$ such that
$$
\|\A(s) \|_{X_s} \leq C(t) h(s), \qquad \forall s \leq t.
$$
Taking supremum over $s \leq t$, since  $h$ is increasing by definition, we obtain
$$
\sup_{s \in (-\infty,t]} \| \A(s)\|_{X_s} \leq C(t) h(t),
$$
i.e.\ $\A$ is a pullback-bounded family.  Uniqueness then follows from Remark \ref{uniqueness}.
\end{proof}

\begin{proof}[Proof of Theorem \ref{existence}.] We will prove that $\omega_\AA$, as defined above, is a family of compact sets satisfying (i)-(ii), and hence, is a global attractor for the process $S(t,s)$. We split the proof into two parts.
\vskip1.5mm \noindent $\diamond$ \hskip1mm Compactness and attraction property of $\omega_\AA(t)$.
\\ \noindent Let  $t \in \R$ be fixed, and $\eps >0$ be arbitrary. By assumption, there exists $t_0 \leq t$
  such that $\alpha_t(S(t,s)\AA(s)) < \eps$ whenever $s \leq t_0$. Then, since $\AA$ is an absorber, we can find $s_0 \leq t_0$ satisfying
$$
S(t_0,s)\AA(s) \subset \AA(t_0) \qquad \forall s \leq s_0.
$$
Therefore, for $\tau \leq s_0$,
$$
\U_\tau = \bigcup_{s \leq \tau} S(t,s) \AA(s) = \bigcup_{s \leq \tau} S(t,t_0)S(t_0,s) \AA(s)  \subset \bigcup_{s \leq \tau}S(t,t_0)\AA(t_0) = S(t,t_0)\AA(t_0),
$$
which yields $\alpha_t(\U_\tau) < \eps$ whenever $\tau \leq s_0$.
Therefore the sets $\overline{\U_\tau}$ are closed subsets of $X_t$, nested as $\tau \to -\infty$ and
$$
\lim_{\tau \to -\infty} \alpha_t(\overline{\U_\tau}) = 0.
$$
It follows by property (K.4) of the Kuratowski measure that
$
\omega_\AA(t)= \bigcap_{\tau \leq t} \overline{\U_\tau}
$
is nonempty and compact.

We now prove that $\omega_\AA(t)$ is attracting in the sense of Definition 1.4-(ii). Suppose not, then there exists $t \in \R$, a pullback-bounded family $\B$, $s_n \to -\infty$, $z_n \in \B(s_n)$ and $\delta >0$ such that
\begin{equation}
\label{proof15:1}
\inf_{z \in \omega_\AA(t)} \|S(t,s_n)z_n-z \|_{X_t} > \delta.
\end{equation}
Extract a subsequence $\{s_{n_k}\}$ from $\{s_n\}$ as follows: given $s_{n_1}, \ldots, s_{n_{k}}$, choose $s_{n_{k+1}}\leq s_{n_{k}}$ such that
$
S(s_{n_k},s_{n_{k+1}})\B(s_{n_{k+1}}) \subset \AA(s_{n_k}).
$
Now observe that
$$
S(t,s_{n_{k+1}})z_{n_{k+1}} = S(t,s_{n_k})S(s_{n_k},s_{n_{k+1}})z_{n_{k+1}} = S(t,s_{n_k}) w_{k},
$$
with $w_{k}=S(s_{n_k},s_{n_{k+1}})z_{n_{k+1}}  \in \AA(s_{n_{k}})$, i.e.
$$
S(t,s_{n_{k+1}})z_{n_{k+1}} \in \U_{s_{n_k}}.
$$
By  property (K.5) of the Kuratowski measure, the sequence $S(t,s_{n_{k+1}})z_{n_{k+1}}$ has an accumulation point in $\omega_\AA(t)$, which contradicts \eqref{proof15:1}. Therefore, the first step is complete.
\vskip1.5mm \noindent $\diamond$ \hskip1mm Invariance (in the sense of Definition 1.4(i)) of $\omega_\AA(t)$.
\\ \noindent Let $t \geq s$. We aim to prove that $\omega_\AA(t) = S(t,s)  \omega_\AA(s)$.

 We first deal with the inclusion $\omega_\AA(t) \supset S(t,s) \omega_\AA(s)$, which is easier.
Let $z \in \omega_\AA(s)$, then, there exist  sequences $s_n \to -\infty$, $z_n \in \AA(s_n)$, such that
\begin{equation} \label{proof15:2}
\|S(s, s_n)z_n-z \|_{X_s} \to 0, \qquad n \to \infty.
\end{equation}
Now, extract again a subsequence $\{s_{n_k}\}$  as follows: given $s_{n_1}, \ldots, s_{n_{k}}$, choose $s_{n_{k+1}}\leq s_{n_{k}}$ such that
$
S(s_{n_k},s_{n_{k+1}})\AA(s_{n_{k+1}}) \subset \AA(s_{n_k}).
$
Hence, we set $$
w=S(t,s)z\in  S(t,s) \omega_\AA(s),\quad w_k=S(s_{n_k},s_{n_{k+1}})z_{n_{k+1}} \in \AA(s_{n_k}).$$ Thus, we also have
\begin{align*}
\|S(t, s_{n_k})w_k-w \|_{X_t} &= \|S(t,s)S(s,s_{n_{k}})w_k-S(t,s)z\|_{X_t} \\& =  \|S(t,s)S(s,s_{n_{k+1}})z_{n_{k+1}}-S(t,s)z\|_{X_t} \to 0
\end{align*}
as $k \to \infty$, using \eqref{proof15:2} and the  continuity of $S(t,s)$. This establishes that $w=S(t,s)z \in \omega_\AA(t)$ and henceforth $\omega_\AA(t) \supset S(t,s) \omega_\AA(s)$ as claimed.

We now turn to the reverse inclusion $\omega_\AA(t) \subset S(t,s) \omega_\AA(s)$. Let $z \in \omega_\AA(t)$ be arbitrary, and choose sequences $s_n \to- \infty$, $z_n \in \AA(s_n)$ such that $s_n \leq s$ for all $n $ and
\begin{equation} \label{proof15:3}
\|S(t, s_n)z_n-z \|_{X_t} \to 0, \qquad n \to \infty.
\end{equation}
Using the attraction property (ii), proven in the first part,
$$
\lim_{n \to \infty} \inf_{w \in \omega_s (\AA)} \|S(s,s_n)z_n -w \|_{X_s} =0,
$$
so that there exists a sequence $w_n \in \omega_\AA(s)$ satisfying
$$
\lim_{n \to \infty} \|S(s,s_n)z_n -w_n \|_{X_s}=0
$$
By compactness of $\omega_\AA(s)$, we have  $w_n \to w \in \omega_\AA(s)$ in  $X_s$ up to a subsequence. This yields $S(s,s_n)z_n \to w$, in $X_s$, and, by continuity of $S(t,s)$, $S(t,s_n)z_n \to S(t,s)w$ (in  $X_t$) as well, i.e. $z=S(t,s) w$, which completes the proof of the second inclusion, and in turn, of Theorem \ref{existence}.
\end{proof}

\section{The oscillon equation with a general potential}

\subsection{Setting of the problem.}
In the following, $|\cdot|$ and $\l \cdot,\cdot\r$ denote
respectively the standard norm and scalar product on $L^2(0,1)$;
$A$ denotes  $-\Delta$ on $(0,1)$ with periodic boundary
conditions.

The symbols $c$ and $\Q$ will stand respectively for a generic
positive constant and a generic positive increasing continuous
function, which may be different in different occurrences; when an
index is added, (e.g.\ $c_0, \Q_0$), the positive constant (resp.\
function) is meant to be specific. Similarly, the symbols
$\Lambda,\Lambda_\imath$ will denote certain energy-like
functionals occurring in the proofs. Finally, in most of the
proofs we adopt the shorthand $\varpi(t):=\e^{-2Ht}.$

 \vskip2mm
 We study the \emph{oscillon
equation} in space dimension $n=1$ with periodic boundary
conditions
\begin{equation}
\label{SYS} \tag{P}
\begin{cases}
\displaystyle
\ptt u(t)+H  \pt u(t) + \e^{-2Ht} A u(t) + \varphi(u(t))=0, & t \geq s,\\ \\
u(s)= u_0 \in H^1_{\rm{per}}(0,1), \pt u(s) = v_0  \in L^2(0,1).
\end{cases}
\end{equation}
Here, (except at the end of Section 4) we consider a general nonlinear potential $V \in \C^2(\R)$,
such that $V(0)=0$, and $\varphi=V'$  satisfies the following
assumptions:
\begin{itemize}
\item[(H0)] $\varphi(0)=0$;
\item[(H1)] there exist $a_0,a_2> 0$, $a_1,a_3\geq0$, $q \geq 2$ such that
$$ a_0|y|^{q-2} -a_1 \leq \varphi'(y) \leq a_2 |y|^{q-2}+a_3.
$$
In addition, when $q=2$ (sublinear case), we assume $a_0 > a_1$.
\end{itemize}

Note that assumptions (H0)-(H1), as well as assumption \eqref{phisecond} in Theorem \ref{fract:thm},  are satisfied by any (non-constant) polynomial $V$ with $V(0)=0$  and positive leading coefficient. In the class of polynomials, only the above mentioned polynomials   satisfy  (H0)-(H1).

Also note that, since $V(0)=\varphi(0)=0$, two consecutive integrations
of (H1) yield
\begin{equation} \label{bound:V1}
\textstyle  \frac{a_0}{q(q-1)}|y|^{q} -\frac{a_1}{2} y^2 \leq V(y) \leq \textstyle  \frac{a_2}{q(q-1)}|y|^{q} +\frac{a_3}{2} y^2.
\end{equation}
Moreover, integrating by parts and using (H0)-(H1), we have, say  for $y \geq 0$,
\begin{align*}
 y\varphi(y) - V(y) & = \int_0^y \sigma \varphi'(\sigma) \, \d \sigma\\
& \geq \int_0^y \sigma (a_0 |\sigma|^{q-2} -a_1) \, \d \sigma\\
& \geq \frac{a_0}{q} |y|^{q} - \frac{ a_1}{ 2} y^2,
\end{align*}
and similarly for $ y< 0$, so that
\begin{equation} \label{fritto} y\varphi(y) \geq V(y) +\frac{a_0}{q} |y|^q - \frac{a_1}{2}y^2 \geq V(y) - c_0,
\end{equation}
for some $c_0\geq 0$ depending only on $a_0,a_1,q$. In particular,
we can take $c_0=0$ whenever $a_1=0$.
\begin{remark} \label{physical}
For the physically interesting potentials
$$V_{+}(y)= \frac{y^2}{2}+ \frac{y^4}{4}+\frac{y^6}{6}, \qquad V_{-}(y)= \frac{y^2}{2}- \frac{y^4}{4}+\frac{y^6}{6},$$
the nonlinearities read
$\varphi_{\pm}(y)= y\pm y^3+y^5$.
Noting that $3y^2 < 1 +4y^4$, it is immediate to observe that $\varphi_{+}$ fulfills both (H0) and (H1), for $q=6$ and e.g.\ $a_0=5,a_1=0,a_2=6,a_3=2$. The same observation yields that, for $\varphi_{-}$, (H0) and
(H1) are satisfied with $q=6, a_0=1,a_1=0,a_2=5,a_3=1.$
\end{remark}
We set $$ \V(u) =\int_0^1 V(u(x))\, \d x. $$ In view of \eqref{bound:V1},
$\V(u) $ is well defined for every $u \in L^q(0,1)$, and \begin{equation}
\label{bound:V} b_0(\| u\|_{L^{q}}^{q}+|u|^2)-b_1 \leq \V (u) \leq
b_2(\| u\|_{L^{q}}^{q}+|u|^2),
\end{equation}
with $b_0,b_2>0$ and $b_1\geq 0$ depending only on the
$a_\imath$ ($\imath=0,\ldots,3$) and $q$;
in particular, $b_1=0$ whenever $a_1=0$.

 Problem \eqref{SYS} is rewritten in our abstract framework as follows. The phase spaces are the Banach spaces
$$
X_t = H^1_{\rm{per}}(0,1) \times L^2(0,1) \quad \textrm{with norms} \quad \|(u,v) \|_{X_t}= \e^{-Ht}|A^{1/2}u | + \| u\|_{L^{q}} + |v |.
$$
For simplicity, we set  $X= X_0$. It is clear that the spaces $X_t$ are all the same as linear spaces;  given $z \in X$, its injection in $X_t$ will still be denoted as $z$.  Since for every $t \in \R$, $z \in X$
$$
\min\{\e^{-Ht},1\} \|z \|_{X} \leq \|z \|_{X_t} \leq \max\{\e^{-Ht},1\} \|z \|_{X}
$$
the norms $\|\cdot \|_{X_s}$, $\|\cdot \|_{X_t}$ are equivalent for any fixed $t,s \in \R$. Again, we stress how the equivalence constants blow up when $t \to \pm \infty$.

\begin{remark}  \label{equivalence}
For some of the proofs below, it will be convenient to use    the natural energy of the problem at time $t$
\begin{equation} \label{energy}
\E_{X_t}(u,v) =  \e^{-2Ht}|A^{1/2}u |^2 + \frac{2}{q}\|
u\|_{L^{q}}^{q} + |u|^2+ |v |^2
\end{equation}
in place of the $X_t$-norm.
It is easy to see that the energy $\E_{X_t}(\cdot)$ is equivalent to the norm $  \|\cdot \|_{X_t}$, in the following sense:
\begin{itemize}
\item[$\cdot$]  a family $\B=\{\B(t) \subset X_t\}$ is pullback-bounded
if and only if $$\displaystyle \sup_{s\in (-\infty, t]} \sup_{z
\in \B(s)}\E_{X_s}(z)< \infty \qquad \forall t \in \R;$$
\item[$\cdot$] a sequence $\{z_n\} \subset X_t$ converges to $z \in X_t$ if and only if $\E_{X_t}(z_n-z)$ converges to zero.
\end{itemize}
\end{remark}

\subsection{Well-posedness and dissipativity.}
Under the assumptions made in the preceeding section, we are able
to obtain a well-posedness result.
\begin{theorem}   Problem \eqref{SYS} generates a strongly continuous process
$S(t,s): X_s \to X_t$, satisfying the following dissipative
estimate: for $z  \in X$,
 \begin{equation}
\label{two0} \E_{X_t}(S(t,s)z)  \leq K_0( \E_{X_s}(z))
\e^{-\mu(t-s)} + K_1, \qquad \forall t\geq s,
\end{equation}
with $ K_1
=4{c_1}^{-1}(c_0+b_1)$. The  positive constants $c_1,K_0,\mu$   are
explicitly defined in the proof, and depend only on the physical
parameters $H$, $a_\imath$ and q.

Furthermore, the following additional continuous dependence
property holds: for every pair of initial conditions
$z^\imath  \in X$ $(\imath=1,2)$ with
$\E_{X_s}(z^\imath) \leq R$ and every $t \geq s$, we have
\begin{equation}
\label{continuous:dep} \E_{X_t}\big[S(t,s)z^1-S(t,s)z^2\big] \leq
\Q_1(R)\exp\big((t-s)+  \e^{H\sigma t  }-\e^{H\sigma s }
\big)\E_{X_s}\big[z^1-z^2\big] , \end{equation} where  $\sigma=
\frac q2-1$ and $\Q_1$ is specified in the proof. \label{wp:th}
\end{theorem}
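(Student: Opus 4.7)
The theorem combines three related but distinct facts: existence and strong continuity of the process, an exponential dissipative estimate, and a continuous-dependence estimate with explicit time-dependent constants. I would treat them in this order; parts two and three do most of the work.

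\textbf{Well-posedness.} I would construct solutions by a Faedo-Galerkin scheme on the eigenbasis of $A$ with periodic boundary conditions. Global existence and passage to the limit rely on the a priori bound \eqref{two0} derived below on smooth Galerkin approximations; in space dimension one the embedding $H^{1}_{\rm per}(0,1)\hookrightarrow L^{\infty}(0,1)$ ensures that $\varphi(u)$ is bounded in $L^{2}$ on energy-bounded sets and depends strongly continuously on $u$, so the nonlinear term passes to the limit. Uniqueness and the strong continuity $S(t,s)\in\C(X_{s},X_{t})$ follow a posteriori from \eqref{continuous:dep}.

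\textbf{Dissipative estimate.} Set $v=\pt u$ and $\varpi(t)=\e^{-2Ht}$. Testing \eqref{SYS} by $v$ produces, for the natural energy $E_{0}(t)=\varpi(t)|A^{1/2}u|^{2}+|v|^{2}+2\V(u)$,
\[
\tfrac{d}{dt}E_{0}= -2H\varpi(t)|A^{1/2}u|^{2}-2H|v|^{2},
\]
which dissipates the kinetic and gradient parts but neither $\V(u)$ nor $|u|^{2}$. To force dissipation in all components of $\E_{X_{t}}$ I add a small multiplier $\varepsilon(\langle u,v\rangle+\tfrac{H}{2}|u|^{2})$, whose derivative along the flow equals $\varepsilon(|v|^{2}-\varpi(t)|A^{1/2}u|^{2}-\langle\varphi(u),u\rangle)$. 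Combining \eqref{fritto} and \eqref{bound:V} one bounds $-\varepsilon\langle\varphi(u),u\rangle$ above by $-\varepsilon\V(u)-\tfrac{\varepsilon a_{0}}{q}\|u\|_{L^{q}}^{q}+\tfrac{\varepsilon a_{1}}{2}|u|^{2}+\varepsilon c_{0}$ and then absorbs the stray $|u|^{2}$ contribution into the $b_{0}|u|^{2}$ part of $\V(u)$; it is exactly here, in the borderline case $q=2$, that the assumption $a_{0}>a_{1}$ is used. For $\varepsilon$ small enough, the resulting functional $\Lambda(t)$ is equivalent to $\E_{X_{t}}(u,v)$ (up to a constant) with equivalence constants uniform in $t$, and satisfies $\tfrac{d}{dt}\Lambda+2\mu\Lambda\le 2\mu K_{1}$. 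Gronwall's lemma on the Galerkin level then yields \eqref{two0}, with $K_{1}=4c_{1}^{-1}(c_{0}+b_{1})$ once the numerical constants are tracked.

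\textbf{Continuous dependence.} Let $\bar u=u^{1}-u^{2}$, $\bar v=\pt\bar u$; the difference solves the linear equation $\ptt\bar u+H\pt\bar u+\varpi(t)A\bar u+(\varphi(u^{1})-\varphi(u^{2}))=0$. Testing by $\bar v$ and by $\bar u$ and adding gives, for $\bar\Lambda=|\bar v|^{2}+\varpi(t)|A^{1/2}\bar u|^{2}+|\bar u|^{2}$,
\[
\tfrac{d}{dt}\bar\Lambda\le 2\bigl|\langle\varphi(u^{1})-\varphi(u^{2}),\bar v\rangle\bigr|+2|\langle\bar u,\bar v\rangle|.
\]
By the mean value theorem and (H1), pointwise $|\varphi(u^{1})-\varphi(u^{2})|\le(a_{2}|\xi|^{q-2}+a_{3})|\bar u|$ for some intermediate $\xi$. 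The one-dimensional Gagliardo-Nirenberg inequality $\|\xi\|_{L^{\infty}}\le C(|A^{1/2}\xi|^{1/2}|\xi|^{1/2}+|\xi|)$, together with the bounds $|A^{1/2}u^{i}(t)|\le\Q(R)\e^{Ht}$ and $|u^{i}(t)|\le\Q(R)$ coming from the dissipative estimate just proved, yields $\|\xi\|_{L^{\infty}}^{q-2}\le\Q_{1}(R)(1+\e^{H\sigma t})$ with $\sigma=q/2-1$. Hence the nonlinear contribution is controlled by $\Q_{1}(R)(1+\e^{H\sigma t})\bar\Lambda$, and Gronwall produces the factor $\exp\bigl((t-s)+c(\e^{H\sigma t}-\e^{H\sigma s})\bigr)$ demanded by \eqref{continuous:dep}. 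The $\|\bar u\|_{L^{q}}^{q}$ term present in $\E_{X_{t}}(\bar u,\bar v)$ is finally recovered from the interpolation $\|\bar u\|_{L^{q}}^{q}\le C\|\bar u\|_{L^{\infty}}^{q-2}|\bar u|^{2}$, which costs at most an extra $\e^{H\sigma t}$ factor that is absorbed into the exponential.

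\textbf{Main obstacle.} The delicate point of step two is keeping the Lyapunov functional $\Lambda$ equivalent to the time-dependent norm $\E_{X_{t}}$ \emph{uniformly} in $t$, in spite of the coefficient $\varpi(t)=\e^{-2Ht}$ degenerating as $t\to\pm\infty$; this is the reason why $|u|^{2}$ appears explicitly in the definition of $\E_{X_{t}}$ and why the cross multiplier $\langle u,v\rangle$ must be controlled through $|u|$ rather than $|A^{1/2}u|$. The subtlety of step three is that the exponent $\sigma=q/2-1$ is forced by the interplay between the Gagliardo-Nirenberg exponent and the growth $|A^{1/2}u(t)|\lesssim\e^{Ht}$ implicit in the $X_{t}$-norm; this is the precise mechanism by which the process, though well-posed, fails to be contractive in any uniform sense on long time intervals, consistently with the absence of an absorbing set in the usual sense discussed in the introduction.
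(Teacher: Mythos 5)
Your overall strategy coincides with the paper's: a Galerkin construction with uniqueness deduced a posteriori from the continuous-dependence estimate; for \eqref{two0}, the energy identity from the multiplier $\pt u$ perturbed by a small multiple of $\l u,\pt u\r+\frac{H}{2}|u|^2$ (the paper uses the equivalent combination $\nu(H|u|^2+2\l\pt u,u\r)$), with \eqref{fritto} and \eqref{bound:V} supplying coercivity of the perturbed functional and the source term $c_0+b_1$ that produces $K_1=4c_1^{-1}(c_0+b_1)$; and for \eqref{continuous:dep}, the mean value theorem, the Agmon inequality and the bounds $|u^\imath(t)|\leq\Q(R)^{1/2}$, $|A^{1/2}u^\imath(t)|\leq\Q(R)^{1/2}\e^{Ht}$ to generate the factor $1+\e^{\sigma Ht}$ with $\sigma=\frac q2-1$. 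All of this matches the paper and is sound.

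The one place you genuinely deviate is in how the $\|\bar u\|_{L^q}^q$ component of $\E_{X_t}(\bar z)$ is produced, and as written that step does not deliver the stated estimate (for $q>2$). You run Gronwall only on the quadratic energy $\bar\Lambda=|\bar v|^2+\varpi|A^{1/2}\bar u|^2+|\bar u|^2$ and recover the $L^q$ term afterwards via $\|\bar u\|_{L^q}^q\leq\|\bar u\|_{L^\infty}^{q-2}|\bar u|^2$, asserting that the resulting factor $\|\bar u\|_{L^\infty}^{q-2}\lesssim\Q(R)^{\sigma}(1+\e^{\sigma Ht})$ ``is absorbed into the exponential.'' It cannot be: taking $t-s\sim\e^{-\sigma Hs}$ with $s\to+\infty$, the exponent $(t-s)+\e^{H\sigma t}-\e^{H\sigma s}$ stays bounded while the prefactor $1+\e^{\sigma Ht}$ blows up, so no constant $\Q_1(R)$ satisfies $(1+\e^{\sigma Ht})\leq\Q_1(R)\exp\big((t-s)+\e^{H\sigma t}-\e^{H\sigma s}\big)$ uniformly in $s\leq t$. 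Your bound is therefore weaker than \eqref{continuous:dep} by an unbounded multiplicative factor (still enough for uniqueness and for continuity of each $S(t,s)$, but not the statement as given). The paper sidesteps this by adding $\bar u+\bar u|\bar u|^{q-2}$ to both sides of the difference equation, so that multiplication by $\pt\bar u$ places $\ddt\frac{2}{q}\|\bar u\|_{L^q}^q$ directly inside the Gronwall functional --- thereby retaining the $L^q$ information of the initial datum --- while the extra right-hand-side term $\l\bar u|\bar u|^{q-2},\pt\bar u\r$ is estimated exactly like $\l\varphi(u^1)-\varphi(u^2),\pt\bar u\r$. Replacing your interpolation step by this device (equivalently, adjoining the differential inequality for $\ddt\|\bar u\|_{L^q}^q$ to the one for $\bar\Lambda$ before applying Gronwall) yields \eqref{continuous:dep} exactly.
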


\begin{proof}
We begin with the formal derivation of   estimate
\eqref{two0}. Hereafter, $(u(t),\pt u(t))$ denotes the solution to \eqref{SYS} with initial time $s \in \R$ and initial condition $z=(u_0,v_0) \in X, $ which we assume to be sufficiently regular.

A  multiplication of (P) by $\pt u$ entails
\begin{equation}
\label{one} \ddt \left[\varpi|A^{1/2}u |^2 + |\pt u |^2  + 2 \V(u)
\right] + 2H[\varpi|A^{1/2}u |^2 + |\pt u |^2 ] =0,\end{equation}
while multiplying (P) by $u$ and then using \eqref{fritto} yields
\begin{equation}
\label{two} \ddt  \left[ H| u|^2+ 2\l \pt u,  u  \r \right] +
2\varpi|A^{1/2}u |^2 - 2|\pt u |^2 = -2 \l\varphi(u),u \r \leq
-2\V(u) + 2c_0.
\end{equation}
For $0 <\nu \leq \frac H4$ to be determined later, we add
\eqref{one} to $\nu$-times \eqref{two}. Setting
\begin{align*}
&\Lambda_1 =  \varpi|A^{1/2}u |^2 + |\pt u |^2  + 2\V(u) + \nu( H
|u|^2 +2  \l \pt u, u \r), \\
& \Lambda_\star = (H+\nu) \varpi |A^{1/2} u |^2 + (H-3\nu)|\pt
u|^2 -\nu^2 H|u |^2-2\nu^2 \l\pt u, u \r,
\end{align*}
we obtain
\begin{equation}
\label{three} \ddt \Lambda_1 + \nu \Lambda_1 + \Lambda_\star  +
H[\varpi |A^{1/2} u |^2 + |\pt u|^2] \leq 2\nu c_0.
\end{equation}
Then, we take advantage of \eqref{bound:V} to get  the bound
\begin{equation}
\label{bound} c_1\E_{X_t}[(u(t),\pt u(t))] - 2b_1 \leq \Lambda_1(t) \leq
c_2\E_{X_t}[(u(t),\pt u(t))]
\end{equation}
with $c_1= \min\{qb_0,1\}/2$ and $c_2$  a positive constant
depending (increasingly) on $H$ and $b_2$. Hence, exploiting the
left-hand inequality of \eqref{bound}  to control $|u|^2$ from
above, we find
\begin{align*}
\Lambda_\star & \geq  (H-3\nu-\nu^2)|\pt u|^2 -\nu^2 (H+1)|u |^2\\
& \geq -\nu^2 (H+1)|u |^2  \geq -\nu^2(H+1)c_1^{-1}\Lambda_1 -
2\nu^2 b_1c_1^{-1} \\ & \geq -\frac\nu2 \Lambda_1-\nu b_1,
\end{align*}
provided we restrict ourselves to $\nu \leq 2\mu
=\min\{1,H/4,(4H+4)^{-1}c_1\}$. Therefore, writing \eqref{three}
for $\nu=2\mu$ and dropping the rightmost summand on the lhs
yields
\begin{equation}
\label{three:a}  \ddt \Lambda_1 + \mu \Lambda_1
\leq 2\mu (2c_0+b_1).
\end{equation}
Multiplying the above inequality by $\e^{\mu t}$ and integrating
between $s$ and $t$, we obtain
\begin{equation}
\label{four}  \Lambda_1(t) \leq  \Lambda_1(s) \e^{-\mu(t-s)} +
2(2c_0+b_1);
\end{equation}
an exploitation of \eqref{bound} then leads to
\begin{align}
  c_1\E_{X_t}[(u(t),\pt u(t))]  &\leq  \Lambda_1(t) + 2b_1 \nonumber \\ &
\leq  \Lambda_1(s) \e^{-\mu(t-s)} + 4(c_0+b_1)  \nonumber \\ &
 \leq   c_2\E_{X_s}( z)  \e^{-\mu(t-s)}   + 4(c_0+b_1) \label{four:a}
\end{align}
which is \eqref{two0}, with $K_0=c_2/c_1, K_1
=4c_1^{-1}(c_0+b_1).$ Note that, like $b_1$ and $c_0$, $K_1=0$
when $a_1=0$ in (H1).

Having at our disposal \eqref{two0}, global existence of (weak)
solutions $(u(t),\pt u(t))$ to Problem \eqref{SYS}  is obtained by
means of a standard Galerkin scheme. The solutions we obtain in
this way satisfy, on any interval $(s,t)$, $-\infty<s<t<+ \infty,$
$$
u \in L^\infty\big(s,t; H^1_{\mathrm{per}}(0,1)\big) \cap
L^q\big(s,t;L^q(0,1)\big), \quad \pt u \in L^\infty\big(s,t;
L^2_{\mathrm{per}}(0,1)\big).
$$
Replacing $L^\infty$ on $(s,t)$ with continuity on $[s,t]$
requires some additional work, as explained in \cite[Section
II.4]{TEM}.

 \vskip1.5mm
 Uniqueness of solutions, and
therefore generation of the process $S(t,s)$ will then follow once
the continuous dependence estimate \eqref{continuous:dep} is
established.

To prove \eqref{continuous:dep}, for $\imath=1,2$, let
$z^\imath=(u_0^\imath,v_0^\imath) \in X$ with $\|z^\imath\|_{X_s}
\leq R$. Accordingly, call $(u^\imath(t),\pt
u^\imath(t))$ the solution corresponding to initial datum $z^\imath$, prescribed at time $s \in \R$. Preliminarily, we recall that the
dissipative estimate  \eqref{two0}   can be rewritten as
\begin{equation}
\label{dependence:1} \E_{X_t}[(u^\imath(t),\pt u^\imath(t))]  \leq K_0 R + K_1:=\Q(R), \qquad \forall
t\geq s.
\end{equation}

Then, we observe that the difference $$\bar z(t) = (u^1(t),\pt u^1(t))-
(u^2(t),\pt u^2(t)) = (\bar u(t),\pt \bar u(t))$$ fulfills the Cauchy
problem on $(s,+\infty)$
$$
\begin{cases}
\displaystyle \ptt \bar u+H  \pt \bar u + \varpi A \bar u + \bar u|\bar
u|^{q-2}   +\bar  u=   \bar u + \varphi(u^2)-\varphi(u^1) +  \bar u|\bar
u|^{q-2},
 \\
\bar z(s) = z^1-z^2.
\end{cases}
$$
Assuming $(\bar u ,\pt \bar u )$ sufficiently smooth, we multiply the above equation by $\pt \bar u$ and obtain the
differential inequality
\begin{equation} \label{dependence:2}
\ddt \E_{X_t}(\bar z) \leq 2 \l\bar u + \varphi(u^2)-\varphi(u^1)  + \bar u|\bar
u|^{q-2}, \pt \bar u \r.
\end{equation}
The first term in the rhs is easily estimated by $
2|\bar u||\pt \bar u|$. Regarding the second, we exploit the Agmon inequality to obtain the bound
$$
\|u^\imath\|_{L^\infty} \leq c ( |u^{\imath}|  + |u^{\imath}|^{\frac12}|A^{1/2}u^{\imath}|^{\frac12}), \qquad \imath=1,2,
$$
so that,
using (H1),   we estimate
\begin{align*}
2 \l \varphi(u^2)-\varphi(u^1), \pt \bar u \r & =-2 \l \varphi'(\xi)\bar u, \pt \bar u \r \\
& \leq c \big(1+\|u^1\|_{L^{\infty}}^{q-2} + \|u^2\|_{L^{\infty}}^{q-2}\big)|\bar u| |\pt \bar u|\\
& \leq c \Big(1+\sum_{\imath=1,2}( |u^{\imath}|  + |u^{\imath}|^{\frac12}|A^{1/2}u^{\imath}|^{\frac12})^{q-2}  \Big)|\bar u| |\pt \bar u|.
\end{align*}
Treating $|\bar u|^{q-2}$ as done above for $\varphi'$ yields the similar control
$$
2 \l \bar u |\bar u|^{q-2}, \pt \bar u \r \leq    c\Big(1+\sum_{\imath=1,2}( |u^{\imath}|  + |u^{\imath}|^{\frac12}|A^{1/2}u^{\imath}|^{\frac12})^{q-2}  \Big)|\bar u| |\pt \bar u|.
$$
Recalling \eqref{dependence:1} and the obvious bounds
$$
|u^\imath(t)| \leq \E_{X_t} (u^\imath(t),\pt u^\imath(t))^{1/2}, \qquad | A^{1/2}u^{\imath}(t)|^{\frac12} \leq \e^{\frac{Ht}{2}} \E_{X_t} (u^\imath(t),\pt u^\imath(t))^{1/4},
$$
we have the estimate
$$
\sum_{\imath=1,2}( |u^{\imath}(t)|  + |u^{\imath}(t)|^{\frac12}|A^{1/2}u^{\imath}(t)|^{\frac12})^{q-2} \leq \Q(R)^\sigma(1+\e^{ \sigma Ht }),
$$
where we have set $\sigma=\frac q2 -1$,
so that the controls on the rhs of \eqref{dependence:2} may be summarized as
\begin{equation} \label{dependence:2a}
\ddt \E_{X_t}(\bar z(t)) \leq c\Q(R)^{\sigma}(1+\e^{\sigma Ht})\E_{X_t}(\bar
z(t)).
\end{equation}
We then apply Gronwall's lemma on $(s,t)$ to get
\begin{align*}
&\E_{X_t}(\bar z(t)) \leq   \Q_1(R)\exp\big((t-s)+ \e^{H\sigma t
}-\e^{H\sigma  s } \big) \E_{X_s}(z^1-z^2),
\end{align*}
where $\Q_1(R)=c\exp\big(\Q(R)^{\sigma}\big)$, as claimed in \eqref{continuous:dep}, so that the proof is complete.
\end{proof}

\begin{remark} Note that, in the proof of Theorem \ref{wp:th},
we have assumed $(u,\pt u)$ and $(\bar u, \pt \bar u)$ to be
smooth enough to derive the estimates \eqref{one} and
\eqref{dependence:2}, which led us to \eqref{two0} and
\eqref{continuous:dep}. These energy inequalities can be made rigorous by
deriving them for the corresponding Galerkin approximation and
then passing to the (lower) limit.
Energy equalities (which were not needed here) can be obtained by
the regularization procedure described in \cite[Lemma
II.4.1]{TEM}.

Finally, the derivation of the a priori estimate
\eqref{three}, based on adding eq.\ \eqref{EQ-INTRO} multiplied by
$u$ to $\nu$-times eq. \eqref{EQ-INTRO} multiplied by $\pt u$ is
reminiscent of \cite{GT,TEM}, where we multiply \eqref{EQ-INTRO}
by $u + \nu \pt u $. However, here we work with the original
variables $u,\pt u$ in place of $u, u + \nu \pt u$ as in
\cite{GT,TEM}.

 \end{remark}

 Using  the dissipative estimate  \eqref{two0}, we
 explicitly construct an absorber for the process generated by
 \eqref{SYS}.
\begin{theorem}There exists $R_\AA=R_\AA(H,a_\imath)>0$ such that
the family
\begin{equation} \label{radius}
\AA=\big\{\AA(t)= \{z \in X_t : \E_{X_t}(z) \leq
R_{\AA} \}\big\}
\end{equation} is an absorber for the process
$S(\cdot,\cdot).$ \label{dissipativity:th}
\end{theorem}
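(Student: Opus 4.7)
The plan is to choose the radius $R_\AA$ strictly larger than the residual constant $K_1$ appearing in the dissipative estimate \eqref{two0} of Theorem \ref{wp:th}, and then derive both pullback-boundedness of $\AA$ and the absorption property directly from that estimate, exploiting the equivalence between $\|\cdot\|_{X_t}$ and $\E_{X_t}$ recorded in Remark \ref{equivalence}.

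Concretely, I would fix once and for all $R_\AA = 2K_1 + 1$; since $K_1 = 4c_1^{-1}(c_0+b_1)$ depends only on $H$ and the $a_\imath$, so does $R_\AA$. To verify pullback-boundedness of $\AA$, note that any $z=(u,v) \in \AA(s)$ satisfies $\E_{X_s}(z) \leq R_\AA$, which by inspection of \eqref{energy} yields $\e^{-Hs}|A^{1/2}u| \leq \sqrt{R_\AA}$, $\|u\|_{L^q} \leq (qR_\AA/2)^{1/q}$, and $|v| \leq \sqrt{R_\AA}$. Summing these, $\|z\|_{X_s}$ is controlled by a constant depending only on $R_\AA$ and $q$, \emph{with no dependence on $s$}, so $\sup_{s \leq t}\|\AA(s)\|_{X_s}$ is finite for every $t \in \R$.

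Next, for the absorption property, I would fix $t \in \R$ and a pullback-bounded family $\B$, and set $\tilde R := \sup_{s \leq t}\sup_{z \in \B(s)} \E_{X_s}(z)$, which is finite by Remark \ref{equivalence}. Applying \eqref{two0} to any $z \in \B(s)$ with $s \leq t$ gives
$$
\E_{X_t}(S(t,s)z) \;\leq\; K_0\, \tilde R\, \e^{-\mu(t-s)} + K_1.
$$
It then suffices to choose $t_0=t_0(t) \leq t$ so small that $K_0\, \tilde R\, \e^{-\mu(t-t_0)} \leq R_\AA - K_1 = K_1 + 1$; an explicit choice is $t_0 = t - \mu^{-1}\max\{0,\log(K_0 \tilde R/(K_1+1))\}$. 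For every $s \leq t_0$ we then obtain $\E_{X_t}(S(t,s)z) \leq R_\AA$, i.e.\ $S(t,s)z \in \AA(t)$, which is the required absorption.

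I do not expect any genuine obstacle here: granted the dissipative estimate \eqref{two0}, the argument is the standard autonomous absorption reasoning transcribed into the time-dependent framework. The only points worth watching are that the absorbing time $t_0$ is allowed to depend on $t$ and on $\B$ through $\tilde R$ (this is permitted by our notion of pullback absorber), and that $\AA(t)$ is defined via the natural energy $\E_{X_t}$ rather than via the $X_t$-norm itself---an inconsequential reformulation thanks to Remark \ref{equivalence}.
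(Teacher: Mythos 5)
Your proposal is correct and follows essentially the same route as the paper: the same choice of radius $R_\AA = 1+2K_1$, the same absorption time $t_0 = t-\mu^{-1}\max\{0,\log(K_0 R(t)/(1+K_1))\}$, and the same direct application of the dissipative estimate \eqref{two0} combined with Remark \ref{equivalence}. The only (harmless) addition is your explicit verification that $\AA$ itself is pullback-bounded, which the paper leaves implicit.
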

\begin{proof}  Let $\B$ be a pullback-bounded family and, for $t \in \R$, let
$$
R(t)=\sup_{s \in (-\infty,t]}\E_{X_s} [\B(s)].
$$
Observe that Remark~\ref{equivalence} guarantees $R(t) < \infty$ for every $t \in \R$.
Estimate \eqref{two0} then reads
$$
\E_{X_t}(S(t,s)z) \leq K_0 \E_{X_s}(z) \e^{-\mu(t-s)} + K_1 \leq
K_0 R(t) \e^{-\mu(t-s)} +   K_1 \leq 1 +2K_1
$$
for every $z \in \B(s)$, provided that
$$
 s \leq t_0 =t-\max\big\{0,\mu^{-1}\textstyle\log \frac{K_0 R(t)}{1+K_1} \big\}.
$$
Taking the supremum over $z \in \B(s)$ , we obtain
$$
\E_{X_t}[S(t,s)\B(s)]  \leq 1+2K_1, \qquad \forall s \leq t_0,
$$
which, setting $R_\AA=1+2K_1 $, reads exactly $S(t,s)\B(s) \subset
\AA(t)$ whenever $s \leq t_0$.
\end{proof}

\section{The Global  Pullback Attractor.}
We now devote ourselves to the construction of a global attractor (in the sense specified in Section \ref{sect:attractors}) for the oscillon equation. Existence of an attractor, uniqueness, and regularity property are specified in the main theorem below.
\begin{theorem} \label{osc:attractor}  The family $\A(t)=\omega_\AA(t)$ is the unique (in the sense of Remark~\ref{uniqueness}) global attractor
of the process $S(\cdot, \cdot)$ generated by (P). Moreover, introducing the family of Banach spaces ($t \in \R$)
$$
Y_t = H^2_{\rm{per}}(0,1) \times H^1(0,1),\qquad \|(u,v) \|_{Y_t}= \e^{-Ht}|Au | + \|u\|_{L^q} +
|A^{1/2}v|+|v |
$$
we have
$$
\A(t)\subset Y_t, \quad\|\A(t)\|^2_{Y_t} \leq h(t), \qquad \forall t \in \R,
$$
where $h(t)$ is a continuous increasing function of $t$, depending
on $R_\AA$, and is defined below.
\end{theorem}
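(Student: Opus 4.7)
The plan is to invoke the abstract results Corollary \ref{cor:attractor} and Corollary \ref{cor:regularity}, applied to the absorber $\AA$ provided by Theorem \ref{dissipativity:th}. To this end we need to produce a splitting $S(t,s)\AA(s) = P(t,s) + N(t,s)$ with $\|P(t,s)\|_{X_t} \to 0$ as $s \to -\infty$ and $N(t,s)$ bounded in $Y_t$ uniformly for $s\leq t$; the compact embedding $Y_t \Subset X_t$ (which in $n=1$ follows from $H^2 \Subset H^1$ together with $H^1 \Subset L^q$) then makes $N(t,s)$ compact in $X_t$. The natural splitting writes the solution as $u = v + w$, where $(v,\pt v)$ solves the linear problem
$$
\ptt v + H\pt v + \varpi A v + v = 0, \qquad v(s)=u_0,\ \pt v(s)=v_0,
$$
and $(w,\pt w)$ solves the nonlinear residual equation $\ptt w + H\pt w + \varpi A w + w = v - \varphi(u)$ with zero initial data. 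For the linear component, multiplying by $\pt v$ supplemented by a small multiple of $v$ (as in the proof of Theorem \ref{wp:th}) yields $\E_{X_t}(v,\pt v) \leq C \E_{X_s}(u_0,v_0)\,e^{-\mu(t-s)} \leq C R_\AA\,e^{-\mu(t-s)}$, so that $P(t,s):=\{(v(t),\pt v(t)) : (u_0,v_0)\in \AA(s)\}$ shrinks in $X_t$ as $s \to -\infty$.

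The delicate step is the $Y_t$-estimate for $w$. We formally test the equation for $w$ with $A\pt w + \nu A w$ for $\nu$ small, producing an auxiliary functional $\Lambda_2$ equivalent to $\varpi|A w|^2 + |A^{1/2}\pt w|^2 + |A^{1/2}w|^2$, and satisfying a differential inequality of the form
$$
\ddt \Lambda_2 + \mu \Lambda_2 \leq C\bigl(\|\varphi(u)\|_{H^1}^2 + \|v\|_{H^1}^2\bigr),
$$
the manipulation being structurally identical to that giving \eqref{three:a}, the time-dependent coefficient being absorbed via $\ddt \varpi = -2H\varpi$. Using the one-dimensional Sobolev embedding $H^1(0,1)\hookrightarrow L^\infty(0,1)$, growth condition (H1), the dissipative estimate \eqref{two0} for $u$, and the observation that $|A^{1/2}u(t)| \leq e^{Ht}\E_{X_t}(u,\pt u)^{1/2}$ is bounded on the absorber at each fixed $t$, the right-hand side is majorized by a continuous increasing function of $t$ independent of $s$. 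Gronwall's lemma then gives $\Lambda_2(t) \leq h(t)$ uniformly in $s\leq t$, and since $(w,\pt w)$ and $(u,\pt u)$ differ by the decaying $(v,\pt v)$, we conclude $\|N(t,s)\|_{Y_t}^2 \leq h(t)$.

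With these two ingredients, Corollary \ref{cor:attractor} delivers the global attractor $\A(t)=\omega_\AA(t)$, and the closure argument in the proof of Corollary \ref{cor:regularity} supplies the regularity bound $\|\A(t)\|_{Y_t}^2 \leq h(t)$. Pullback-boundedness of $\A$ is automatic: the closed ball $\AA(s)$ absorbs itself, so $\omega_\AA(s)\subset \AA(s)$, whence $\sup_{s\le t}\|\A(s)\|_{X_s} \leq R_\AA$; uniqueness then follows from Remark \ref{uniqueness}. The main technical hurdle is the higher-order estimate: the degenerate coefficient $\varpi(t)$ forces careful tracking of cross terms arising from $\ddt \varpi$, and the polynomial nonlinearity $\varphi(u)$ must be estimated in $H^1$ using only quantities bounded on the absorber, keeping in mind that $\|u(t)\|_{H^1}$ is controlled only by $e^{Ht}$ times $\|u(t)\|_{X_t}$, so the constants inevitably depend on $t$ — which is exactly why the bound $h(t)$ is $t$-dependent.
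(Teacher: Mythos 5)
Your proposal is correct in its overall architecture and matches the paper's strategy exactly at the top level: build a splitting $S(t,s)\AA(s)\subset P(t,s)+N(t,s)$ with $\|P(t,s)\|_{X_t}\to 0$ and $N(t,s)$ bounded in $Y_t$ uniformly for $s\le t$, then invoke Corollaries \ref{cor:attractor} and \ref{cor:regularity}. Where you genuinely diverge is in the choice of decomposition. The paper lets the decaying component $p$ solve the \emph{nonlinear} problem \eqref{SYS-P} with $\varphi_\star(y)=y+|y|^{q-2}y$ and the original data, so that Lemma \ref{lemma:P} is literally a rerun of the dissipative estimate \eqref{two0} with $a_1=0$, hence $K_1=0$ and exponential decay of the full energy $\E_{X_t}$ (including the $\|\cdot\|_{L^q}^q$ term) with no extra work; the compact part $n$ then solves \eqref{SYS-Q} forced by $\varphi_\star(p)-\varphi(u)$. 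You instead take the decaying part linear, which in space dimension one (where $H^1\hookrightarrow L^\infty$ makes the nonlinearity subcritical) is equally viable: the forcing of your residual equation still only requires $\varphi(u)$ in $H^1$, controlled via Agmon/Sobolev and \eqref{boundX} at the price of $t$-dependent constants, exactly as in the paper's Lemma \ref{lemma:Q}. What the paper's nonlinear splitting buys is a cleaner decay statement; what yours buys is a linear, explicitly solvable decaying part. Your shortcut for pullback-boundedness ($\AA$ absorbs itself, hence $\omega_\AA(t)\subset\AA(t)$) is legitimate and is in effect the content of Remark \ref{improvebound}.

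Two small inaccuracies you should repair, neither fatal. First, with the term $+w$ kept on the left of your residual equation, its right-hand side must be $u-\varphi(u)=v+w-\varphi(u)$, not $v-\varphi(u)$. Second, the claim $\E_{X_t}(v,\pt v)\le C\,\E_{X_s}(u_0,v_0)e^{-\mu(t-s)}$ with $C$ absolute is not quite what the linear energy estimate gives, since $\E_{X_t}$ contains $\tfrac2q\|v\|_{L^q}^q$, which for $q>2$ must be recovered from $|v|$ and $|A^{1/2}v|$ through the embedding $H^1\hookrightarrow L^\infty$ and therefore picks up a factor growing like $\e^{qHt/2}$; for fixed $t$ this still tends to zero as $s\to-\infty$, which is all that Corollary \ref{cor:attractor} requires, but the constant is $t$-dependent. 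The paper's choice of $\varphi_\star$ avoids this issue because the $L^q$ norm is then part of the decaying energy itself.
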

\begin{remark} \label{improvebound} Observe that the injections $\I_t: Y_t \to X_t$ satisfy $$\|\I_t\|_{\LL(Y_t,X_t)} \leq \max\{1,\lambda_1^{-1}\}$$ (here, $\lambda_1$ stands for the Poincar\'e constant on $(0,1)$).
Hence, the estimate of Theorem \ref{osc:attractor} implicitly asserts that $\A$ is a pullback-bounded family in $X_t$. Then, the invariance of $\A$ implies that $\A(t) \subset \AA(t)$ for every $t \in \R$; more explicitly,
\begin{equation} \label{boundX}
\sup_{(u,v) \in \A(t)} \left[e^{-2Ht}|A^{1/2}u |^2 +\textstyle \| u\|_{L^{q}}^{q} + |v |^2\right] \leq R_\AA, \qquad \forall t \in \R.
\end{equation}
\end{remark}
\begin{remark} \label{Hirrelevant}  Note that the Hamiltonian structure of the oscillon system is irrelevant to the existence of a global
attractor. Such structure exists for any potential $V$ or for $V=0$. But the proof of Theorem \ref{osc:attractor}
depends on a potential $V$ with the property \eqref{fritto}, through Lemma \ref{lemma:P} below.
\end{remark}
We turn to the proof of Theorem \ref{osc:attractor}. We will work
throughout  with
$$
z=(u_0,v_0) \in \AA(s);
$$
until the end of the section, the generic constants $c>0$
appearing depend only on $R_\AA$, whose dependence   on the
physical parameters of the problem has been specified earlier.
Hence, the estimate \eqref{two0} now reads
\begin{equation}
\label{palmiro} \E_{X_t}(S(t,s) z)  \leq  K_0 R_\AA + K_1 :=c,
\qquad \forall s \in \R, t     \geq s.
\end{equation}
Now, with the aim of using Corollary \ref{cor:attractor}, we perform a suitable decomposition of the solution of Problem (P). We set
$$
(u(t),\pt u(t)) = S(t,s)z = P_{z}(t,s) + N_z(t,s) = (p(t),\pt p(t)) + (n(t),\pt n(t)),
$$
where
\begin{equation}
\label{SYS-P}
\begin{cases}
\displaystyle
\ptt p(t)+H  \pt p(t) + \e^{-2Ht} A p(t) + \varphi_\star(p(t)),=0, & t \geq s,\\ \\
p(s)= u_0 , \pt p(s) = v_0,
\end{cases}
\end{equation}

\begin{equation}
\label{SYS-Q}
\begin{cases}
\displaystyle
\ptt n(t)+H  \pt n(t) + \e^{-2Ht} A n(t) = \varphi_\star(p(t))-\varphi(u(t)), & t \geq s,\\ \\
n(s)= 0, \pt n(s) = 0.
\end{cases}
\end{equation}
and $ \varphi_\star (y) = y +|y|^{q-2}y.$
\begin{lemma} \label{lemma:P}
There exists  $K_2,\mu_1 >0$ so that
$$
\E_{X_t}(P_z (t,s)) \leq K_2R_\AA \e^{-\mu_1 (t-s)} \qquad \forall
t \in \R, s \leq t.
$$
\end{lemma}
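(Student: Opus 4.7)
The equation \eqref{SYS-P} for the component $p$ is an instance of the oscillon problem \eqref{SYS} in which the nonlinearity is the specific map
$$
\varphi_\star(y) = y + |y|^{q-2}y, \qquad V_\star(y)=\textstyle\frac12 y^2+\frac1q |y|^q \ge 0.
$$
One checks immediately that $V_\star(0) = \varphi_\star(0) = 0$ and
$$
\varphi_\star'(y) = 1 + (q-1)|y|^{q-2},
$$
so that hypotheses (H0)--(H1) are satisfied with parameters $a_0 = q-1$, $a_1 = 0$, $a_2 = q-1$, $a_3 = 1$ (the requirement $a_0 > a_1$ in the sublinear case $q=2$ is trivially met). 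In particular, the first step is the observation that $\varphi_\star$ is of the class considered in Theorem~\ref{wp:th}.

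The crucial point is that $a_1 = 0$. As emphasized immediately after \eqref{fritto} and after \eqref{bound:V}, when $a_1 = 0$ both constants $c_0$ (in \eqref{fritto}) and $b_1$ (in \eqref{bound:V}) vanish. Consequently, reapplying the argument that leads from \eqref{one}--\eqref{two} to \eqref{four:a} to the homogeneous problem \eqref{SYS-P} yields exactly the dissipative inequality \eqref{two0}, but now with
$$
K_1^\star = 4c_1^{-1}(c_0+b_1) = 0.
$$
Thus we obtain the purely exponential bound
$$
\E_{X_t}(P_z(t,s)) \le K_2\,\E_{X_s}(z)\,e^{-\mu_1(t-s)}, \qquad \forall t\ge s,
$$
with constants $K_2, \mu_1 > 0$ depending only on $H$ and $q$. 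Since $z \in \AA(s)$ implies $\E_{X_s}(z) \le R_\AA$ by \eqref{radius}, the conclusion of the lemma follows at once.

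\textbf{The main (and only) issue} is recognizing that the splitting \eqref{SYS-P}--\eqref{SYS-Q} was engineered so that $p$ solves a dissipative oscillon equation whose nonlinearity $\varphi_\star$ has the \emph{coercivity-without-residual-constant} property ($a_1=0$); this is exactly what forces the additive constant in \eqref{two0} to disappear and produces genuine decay to zero. The nontrivial source term $\varphi_\star(p) - \varphi(u)$ is deliberately placed on the side of $n$, whose treatment (compactness in $Y_t$) will be carried out in the subsequent lemma; the present lemma is therefore a direct specialization of the already-established Theorem~\ref{wp:th} and requires no new a priori estimate.
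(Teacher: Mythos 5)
Your proof is correct and follows essentially the same route as the paper: both arguments rerun the dissipative estimate of Theorem~\ref{wp:th} for the nonlinearity $\varphi_\star$, observe that (H1) holds with $a_1=0$ so that $c_0=b_1=0$ and hence $K_1=0$ in \eqref{two0}, and then use $\E_{X_s}(z)\le R_\AA$ for $z\in\AA(s)$. Your explicit verification of the constants $a_0,a_2,a_3$ for $\varphi_\star$ is a harmless elaboration of what the paper states more tersely.
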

\begin{proof}
We peruse the proof of  \eqref{two0}, Theorem \ref{wp:th},
replacing $\varphi $ with  $\varphi_\star$. In this case (H1)
holds with $a_1=0,$ and the corresponding potential $V_\star(y) =
y^2/2 +|y|^{q}/q$ satisfies \eqref{fritto} with $c_0=0$, and
\eqref{bound:V} with (e.g.) $b_0=1/q,b_2=1$, and $b_1=0$. Hence
$K_1=0$ in \eqref{two0}, which is exactly the claimed estimate.
Observe that the constants $\mu_1$ and $K_2$ can be explicitly
computed.
\end{proof}
\begin{lemma}There exists a continuous positive increasing function $h$  such that \label{lemma:Q}
\begin{equation} \label{lemma:Q-0}
\sup_{s\in (-\infty,t]} \sup_{z \in \AA(s)} \|N_z(t,s) \|_{Y_t}^2 \leq h(t) \qquad \forall t \in \R.
\end{equation}
\end{lemma}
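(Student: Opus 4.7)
The plan is to split the $Y_t$-norm of $N_z(t,s)=(n(t),\pt n(t))$ into a low-order part ($\|n\|_{L^q}+|\pt n|$) and a high-order part ($\e^{-Ht}|An|+|A^{1/2}\pt n|$), and handle them separately. The low-order part is essentially already proved: since $(n,\pt n)=(u,\pt u)-(p,\pt p)$, combining the dissipative bound \eqref{palmiro} for $(u,\pt u)$ with Lemma~\ref{lemma:P} for $(p,\pt p)$ and using the triangle-type inequality for $\E_{X_t}$ gives $\E_{X_t}(n(t),\pt n(t))\leq c$ uniformly in $s\leq t$ and $z\in\AA(s)$; in particular, $\|n(t)\|_{L^q}$ and $|\pt n(t)|$ are uniformly bounded.

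For the high-order part, at the level of Galerkin approximants (as in the proof of Theorem~\ref{wp:th}) I would test the linear equation \eqref{SYS-Q} against $A\pt n$. Integrating by parts in the forcing term (using the periodic boundary conditions) and absorbing with Young's inequality, one arrives, for
\[
\Lambda_2(t):=|A^{1/2}\pt n(t)|^2+\e^{-2Ht}|An(t)|^2,
\]
at the differential inequality
\[
\ddt\Lambda_2+H\Lambda_2\leq \tfrac{1}{H}\|\partial_x[\varphi_\star(p)-\varphi(u)]\|_{L^2}^2 =: g(t),
\]
where the surviving coefficient $H$ on the left comes jointly from the damping $H\pt n$ and from the time derivative of $\e^{-2Ht}$.

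The forcing is then controlled by writing
\[
\partial_x[\varphi_\star(p)-\varphi(u)]=[\varphi_\star'(p)-\varphi'(u)]\partial_x p-\varphi'(u)\partial_x n,
\]
applying (H1) and the analogous bound $|\varphi_\star'(y)|\leq c(1+|y|^{q-2})$ for the pointwise control, and invoking the one-dimensional Gagliardo--Nirenberg estimate $\|w\|_{L^\infty}\leq c(|w|+|w|^{1/2}|A^{1/2}w|^{1/2})$ combined with the $X_t$-bounds on $u,p$: since $|u|,|p|\leq c$ and $|A^{1/2}u|,|A^{1/2}p|\leq c\e^{Ht}$, we get $\|u\|_{L^\infty},\|p\|_{L^\infty}\leq c(1+\e^{Ht/2})$ and $|\partial_x p|,|\partial_x n|\leq c\e^{Ht}$, whence $g(t)\leq c(\e^{2Ht}+\e^{qHt})$.

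Since $n(s)=\pt n(s)=0$, we have $\Lambda_2(s)=0$, and integration of the inequality yields
\[
\Lambda_2(t)\leq \int_s^t \e^{-H(t-\tau)} g(\tau)\,\d\tau \leq \int_{-\infty}^t \e^{-H(t-\tau)} g(\tau)\,\d\tau,
\]
the latter absolutely convergent thanks to the exponential decay of $g(\tau)$ as $\tau\to-\infty$, and defining a continuous, increasing function of $t$. Adding this to the low-order bound produces the desired $h(t)$. The \emph{main obstacle} is precisely the forcing estimate: since (H1) controls only $\varphi'$ (not $\varphi''$) and the $X_t$-norms of $u,p$ grow in $t$ like $\e^{Ht}$, an $s$-uniform bound on the forcing on $(-\infty,t]$ is not immediate. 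What saves us is that $g(\tau)$ is built exclusively out of positive powers of $\e^{H\tau}$, so it decays to $0$ as $\tau\to-\infty$ and therefore remains integrable against the dissipative kernel $\e^{-H(t-\tau)}$, the decay being inherited from the absorber's definition.
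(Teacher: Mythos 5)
Your proposal is correct and follows essentially the same route as the paper: the identical splitting of the $Y_t$-norm into the low-order part (handled by $n=u-p$ together with \eqref{palmiro} and Lemma~\ref{lemma:P}) and the high-order part, the same multiplier $A\pt n$ for the same functional $\Lambda_2$, the same use of (H1) with the Agmon inequality to bound the forcing by powers of $\e^{H\tau}$, and the same conclusion from $\Lambda_2(s)=0$ via the dissipative kernel $\e^{-H(t-\tau)}$. The only cosmetic difference is that you integrate the forcing $g(\tau)$ against the kernel over $(-\infty,t]$, while the paper simply majorizes $g(\tau)$ by its value at $\tau=t$ before applying Gronwall; both yield the same $s$-uniform bound $h(t)\sim c(1+\e^{qHt})$.
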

\begin{proof} We first observe that $n(t)=u(t)-p(t)$, so that, using \eqref{palmiro} and
Lemma~\ref{lemma:P},
\begin{equation} \label{lemma:Q-1}
\|n(t) \|^q_{L^q} + \e^{-2Ht} |A^{1/2} n(t) |^2+ |\pt n(t)|^2 \leq
c.
\end{equation}
Therefore, we are only left to control the higher-order seminorms
appearing in $\|N_z(t,s) \|_{Y_t}$. To this aim, assuming $(n,\pt n)$ is sufficiently regular,  we multiply the
equation \eqref{SYS-Q}
 by $A\pt n$, obtaining for the functional
$$
\Lambda_2 = \varpi|An |^2 + |A^{1/2} \pt n|^2
$$
the differential equation
$$
\ddt \Lambda_2 + 2H\Lambda_2    = -2\l\varphi (u)- \varphi_\star(p), A\pt n  \r.
$$
Using (H1) with the Agmon inequality, and taking advantage of
\eqref{palmiro} in the last inequality yields
\begin{align*}
-2\l\varphi (u), A\pt n  \r & = -2\l \varphi'(u)\partial_x u, A^{1/2} \pt n\r
\\ &\leq c(1+|u |^{q-2}|A^{1/2} u |^{q-2})|A^{1/2} u |^{2} +  H|A^{1/2} \pt n |^2 \\
& \leq c (1+\|u \|_{L^q}^{q-2})|A^{1/2} u |^{q} +  H|A^{1/2} \pt n |^2 + c \\
& \leq c \varpi^{-q/2} + H|A^{1/2} \pt n |^2 + c.
\end{align*} Similarly,
\begin{align*}
2\l \varphi_\star(p) ,A \pt n  \r &  \leq c(|A^{1/2}p|^2
+|p|^{q-2}|A^{1/2} p |^q) +  H  |A^{1/2} \pt n |^2
\\ & \leq  c(|A^{1/2}p|^2+\|p\|_{q}^{q-2}|A^{1/2} p  |^q) +  H  |A^{1/2} \pt n |^2 \\
& \leq c\varpi^{-q/2} +  H |A^{1/2} \pt n |^2
\end{align*}
 where Lemma~\ref{lemma:P} is used in the last inequality.
Summarizing, for a fixed $t>s$, we arrive at the differential
inequality
$$
\frac{\d}{\d \tau } \Lambda_2(\tau) + H \Lambda_2 (\tau)   \leq
c(1+\e^{qH\tau} ) \leq c(1+\e^{qHt}), \qquad \forall \tau \in
[s,t].
$$
 An application of the Gronwall lemma then yields
\begin{equation} \label{lemma:Q-2}
\varpi\|An(t) \|^2 + \|A^{1/2} \pt n(t)\|^2 = \Lambda_2(t) \leq
c(1+\e^{qHt}),
\end{equation}
due to the fact that $\Lambda_2(s)=0$. Combining \eqref{lemma:Q-1}-\eqref{lemma:Q-2}, we finally arrive at
$$
\|N_z(t,s) \|_{Y_t}^2 \leq c(1+\e^{qHt}) := h(t),
$$
for all $t \in \R, s\leq t $ and $z \in \AA(s)$, which is what we
looked for.
\end{proof}
We can now complete the proof of the main theorem.
\begin{proof}[Proof of Theorem \ref{osc:attractor}]
First of all, we remark again  that each $Y_t$ is compactly
embedded in $X_t$ and the injections $\I_t: Y_t \to X_t$ satisfy
$\|\I_t\|_{\LL(Y_t,X_t)} \leq \max\{1,\lambda_1^{-1}\}$. Finally,
each $Y_t$ is a reflexive Banach space, so that closed balls of
$Y_t$ are closed in $X_t$. These considerations ensure that we are
in   position to apply Corollaries \ref{cor:attractor} and
\ref{cor:regularity}.

 Setting $$
P(t,s) = \bigcup_{z \in \AA(s)} P_z(t,s), \qquad N(t,s) =
\bigcup_{z \in \AA(s)} N_z(t,s),
$$
we have $S(t,s) \AA(s) \subset P(t,s)+ N(t,s)$. Lemma \ref{lemma:P} grants
$$
\lim_{s\to -\infty} \|P(t,s) \|_{X_t} = \lim_{s \to -\infty}
\sup_{z \in \AA(s)} \|P_z(t,s) \|_{X_t} =0,
$$
(observe that convergence to zero in $\|\cdot\|_{X_t}$ is equivalent to convergence to zero of $\E_{X_t}(\cdot)$, as stated in Remark \ref{equivalence}),
while Lemma \ref{lemma:Q} establishes that $$ \sup_{s\leq t}\|
N(t,s) \|_{Y_t}^2 \leq h(t).
$$ Therefore, $N(t,s) $ is compact in $X_t$, for every $t \in \R, s \leq t$. Applying Corollary \ref{cor:attractor}, we achieve the existence of  the global attractor $\A(t)=\omega_\AA(t)$. Finally, estimate \eqref{lemma:Q-0} and   Corollary \ref{cor:regularity} yields $\|\A(t)\|^2_{Y_t} \leq h(t)$, and the uniqueness of $\A$ in the sense of Remark \ref{uniqueness}.
\end{proof}

\section{The physical potentials.}
In this short section we apply the general results obtained in Sect.\ 3 and 4 to the  potentials discussed in the physics literature
$$
V_{\pm}(y) = \frac12 y^2 \pm \frac14 y^4 + \frac16 y^6
$$
(see also Remark \ref{physical}). We also make some remarks concerning the forward asymptotic behavior (i.e.\ we fix an initial time $s \in \R$ and let $t \to +\infty$), for the whole class of potentials satisfying our assumptions (in particular, $V_\pm$).

\subsection{Pullback exponential decay of \eqref{EQ-INTRO} with $V_\pm$}
Apart from the regularity result of Theorem \ref{osc:attractor}, we do not dwell on the structure of the global pullback attractor for a general potential satisfying our assumptions (H0),(H1). However,  for both     potentials
$V_{+},V_{-}$ described in Remark \ref{physical},
the pullback attractor $\A $ is reduced to zero, i.e.\ $\A(t)=\{0\}$ for every $t \in \R$.

Indeed, both $V_{-}$ and $V_{+}$ comply with assumptions (H0)-(H1), with (in particular) $a_1=0$.
As a result, we can take $c_0=0$ in \eqref{fritto}, and
\eqref{bound:V} holds with
$b_1=0$. We then read in    Theorem \ref{wp:th} that
estimate \eqref{two0} holds with $K_1=0$ and can therefore be rewritten as
\begin{equation}
\e^{-2Ht}|A^{1/2}u(t) |^2 + \| u(t)\|_{L^{q}}^{q} + |\pt u(t) |^2 \leq cR(t)\e^{-\mu(t-s)},
 \qquad \forall t \geq s
\label{bingo}
\end{equation}
whenever $z \in \B(s)$, with  $$R(t)=\sup_{s\leq t} \sup_{z \in \B(s)} \E_{X_s}(z)
< \infty.$$ Hence, the rhs of \eqref{bingo}, and thus the lhs, go to zero as $s \to -\infty$. We conclude that the potentials $V_+$, $V_{-}$ (like any admissible
potential fulfilling (H1) with $a_1=0$) produce (pullback) exponential decay of the solutions
originating from pullback-bounded inital data. In other
words, the family $\big\{\A(t)=\{0\}\big\}_{ t \in \R}$ is the (unique, in the sense of Remark \ref{uniqueness}) time-dependent
global attractor for the process generated by \eqref{SYS}.

\begin{remark}
It is observed in the physical literature (see also the caption to
Figure \ref{figoscillon}) that the term $\frac{y^6}{6}$ in the
potentials $V_{\pm}$ does not have a deep physical meaning.
Accordingly, we consider the modified potentials
$$
V_{\alpha \pm} (y) =   \frac12 y^2 \pm \frac14 y^4 +
\frac{\alpha}{6} y^6
$$
for $\frac14< \alpha <\frac{9}{20}$.  The potentials $V_{\alpha
-}$ still have a unique local and  global minimum at 0, but fail to be convex, and henceforth  the constant
$a_1$ appearing in (H1) must be taken strictly positive. On the contrary, $a_1$ can  be chosen  to be zero for
$V_{\alpha +}$. Therefore, we still have pullback exponential decay for
$V_{\alpha +}$, whereas the result is not known for $V_{\alpha
-}$; that is, in this case we cannot conclude that the pullback
attractor is trivial.

The range $0 <\alpha <\frac14$ for $V_{\alpha-}$, where two nontrivial (i.e.\ negative) global minima  appear, is less relevant for the physical problem under consideration: however, assumptions (H0)-(H1) (with, necessarily, $a_1 >0$) are still valid.

\end{remark}

\subsection{Forward convergence} We now consider the oscillon equation between times $s$ and $t$, $s\in \R$ fixed, $t>s$ with the aim of letting $t \to +\infty.$
We assume that the initial data $z=(u_0,v_0)$ satisfies $\E_{X_s}(z) \leq R$. We then infer from \eqref{two0} that
\begin{equation}
\e^{-2Ht}|A^{1/2}u(t) |^2 + \| u(t)\|_{L^{q}}^{q} + |\pt u(t) |^2 \leq K_0R\e^{-\mu(t-s)}
 +K_1\qquad \forall t \geq s
\label{bingo2},
\end{equation}
where $0 <\mu <H$, $K_0$ and $K_1=c(c_0+b_1)$ have been computed in Theorem \ref{wp:th}.

As mentioned above, in the case of the potentials $V_-$ and $V_+$,
$K_1=0$.  Consequently, we observe that,  the solution $(u(t), \pt
u(t)) = S(t,s) z$ decays exponentially to zero in the  norm  $L^q
\times L^2$ as $t \to +\infty$ (forward convergence). Due to the
fact that the bound on the norm of the initial data $z$ depends on
the initial time $s$, we cannot conclude that the above mentioned
convergence is uniform in $s$. Therefore, we cannot conclude that
$\{0\}$ is the (weak, i.e.\ $L^q \times L^2$) uniform attractor
for $S(t,s)$ in the sense of Babin and Vishik \cite{BV,CV1}.
Furthermore, since $\mu$ must be less than $H$ in \eqref{two0},
nothing can be said about the behavior of $|A^{1/2}u(t) |$ as $t
\to +\infty$ (while the pullback approach grants a bound at every
fixed time $t$, see above). \emph{Hence, this analysis leaves also
open the question of the behavior of $|A^{1/2}u(t) |^2$ as $t \to
+ \infty$.}

With a further analysis, we can extend the  result of decay of $\pt u$  described above to the whole class of potentials satisfying assumptions (H0)-(H1), though the rate of decay will be only polynomial in time, instead of exponential as in the case of potentials $V_{\pm}$. The proof of the following proposition relies on an adaptation of the argument in \cite[Lemma 2.7]{BP}.
\begin{proposition} \label{prop}
Let $s  \in \R$ be fixed,  $z=(u_0,v_0) \in X$ such that
 $
 \E_{X_s}(z) \leq R.
 $
 We have the estimate
\begin{equation} \label{decayfw}
\|\pt u(t)\|^2  \leq \Q_2(R) \frac{1}{1+(t-s)},
\end{equation}
for every $t \geq s$, where $\Q_2$ is specified below. \end{proposition}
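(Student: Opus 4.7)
My plan is to combine an $L^1$-in-time integrability bound on $|\pt u|^2$, which follows directly from the basic energy identity of Theorem~\ref{wp:th}, with an ODE-type argument in the spirit of \cite[Lemma~2.7]{BP} that upgrades this integrability to the pointwise polynomial bound \eqref{decayfw}.

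As a first step, I would revisit the energy identity \eqref{one} used in the proof of Theorem~\ref{wp:th},
$$
\ddt\bigl[\varpi|A^{1/2}u|^2+|\pt u|^2+2\V(u)\bigr]+2H\bigl[\varpi|A^{1/2}u|^2+|\pt u|^2\bigr]=0,
$$
and integrate it in time from $s$ to $t$. Together with the lower bound $\V(u)\geq -b_1$ provided by \eqref{bound:V1}, and the initial condition $\E_{X_s}(z)\leq R$, this yields at once the uniform-in-$t$ bound
$$
\int_{s}^{t}|\pt u(\tau)|^{2}\,\d\tau \;\leq\;\tfrac{1}{2H}\bigl(\E_{X_s}(z)+2b_{1}\bigr)\;\leq\;\Q_{3}(R),
$$
for some continuous increasing function $\Q_{3}$.

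To extract a pointwise decay rate from this $L^{1}$-bound, I would exploit the auxiliary identity \eqref{two} obtained by multiplying (P) by $u$, namely
$$
2|\pt u|^{2}=\ddt\eta(t)+2\varpi|A^{1/2}u|^{2}+2\l\varphi(u),u\r,\qquad \eta(t)=H|u|^{2}+2\l u,\pt u\r,
$$
in which $\eta$ is uniformly bounded by the dissipative estimate \eqref{two0}. I would then form a weighted Lyapunov functional
$$
\Phi(t)\;=\;(1+t-s)\,|\pt u(t)|^{2}\;+\;\text{(bounded correction)},
$$
the correction being designed---in the spirit of the ODE manipulation of \cite{BP}---to cancel the indefinite-sign boundary contributions that appear when the weight $(1+t-s)$ is differentiated against the two identities above. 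A direct calculation using both identities should then produce a differential inequality of the form
$$
\Phi'(t)+H(1+t-s)|\pt u(t)|^{2}\;\leq\;c\bigl(|\pt u(t)|^{2}+\psi(t)\bigr),
$$
where $\psi\in L^{1}(s,+\infty)$ with norm depending only on $R$.

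Integrating this last inequality from $s$ to $t$ and using both the bound of the first step and the uniform boundedness of the correction at $t=s$, one then concludes $\Phi(t)\leq \Q_{2}(R)$ uniformly in $t\geq s$, which is \eqref{decayfw}. The main obstacle I expect is the design of the correction term in $\Phi$: a naive weight $(1+t-s)$ applied to the full energy $\E(t)$ only gives $\Phi(t)\lesssim (1+t-s)$, i.e.\ $|\pt u(t)|^{2}$ bounded, which we already know from Theorem~\ref{wp:th}. The reason is that the ``potential'' quantities $\varpi|A^{1/2}u|^{2}$ and $\V(u)$ are not in $L^{1}(s,+\infty)$ in the general case $a_{1}>0$. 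Isolating the purely kinetic piece $|\pt u|^{2}$ from the non-decaying potential contributions, through the splitting provided by \eqref{two}, is exactly the role played by the mechanism imported from \cite[Lemma~2.7]{BP}.
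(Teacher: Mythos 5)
Your first step is sound: integrating \eqref{one} and using $\V(u)\geq -b_1$ indeed gives $2H\int_s^\infty\bigl[\e^{-2H\tau}|A^{1/2}u(\tau)|^2+|\pt u(\tau)|^2\bigr]\,\d\tau\leq \Q(R)+2b_1$. (Note, incidentally, that this shows $\e^{-2Ht}|A^{1/2}u|^2$ \emph{is} in $L^1(s,+\infty)$, contrary to what you assert in your closing paragraph; it is only $\V(u)$ that need not be integrable.) The genuine gap is in the second step: the ``bounded correction'' in your weighted functional is never constructed, and there is a concrete obstruction to constructing it. Differentiating $(1+t-s)|\pt u|^2$ produces, via the equation, the term $-2(1+t-s)\langle\varphi(u),\pt u\rangle=-(1+t-s)\ddt\bigl[2\V(u)\bigr]$; any integration by parts in time converts this into $2\int_s^t\V(u(\tau))\,\d\tau-2(t-s)\V(u(t))$ plus bounded terms, and since $\V(u(\cdot))$ is merely bounded (non-decaying, and possibly negative when $a_1>0$), this contributes $O(t-s)$, which no bounded correction can cancel. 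Substituting \eqref{two} does not rescue this: by \eqref{fritto}, $\langle\varphi(u),u\rangle$ controls $\V(u)$ only up to the additive constant $c_0$, which reappears as the same linear growth. Put differently, the $L^1$-in-time bound on $|\pt u|^2$ combined with the only available one-sided control, namely $\ddt\Phi\leq 0$ for $\Phi=\e^{-2Ht}|A^{1/2}u|^2+|\pt u|^2+2\V(u)$, yields $|\pt u(t)|^2\leq\inf_{\tau\in[s,t]}\bigl[\e^{-2H\tau}|A^{1/2}u(\tau)|^2+|\pt u(\tau)|^2\bigr]+O(1)$, with an additive constant coming from the variation of $\V(u)$ between the two times --- exactly what \eqref{decayfw} must beat.

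The paper's argument is different and uses neither time weights nor the $L^1$ bound. It works directly with $\Phi$ as above, which by \eqref{one} satisfies $\ddt\Phi=-2H\bigl[\e^{-2Ht}|A^{1/2}u|^2+|\pt u|^2\bigr]\leq0$, is bounded below by $-2b_1$ and bounded at $t=s$ by $\Q(R)$. Hence on $[s,t_\delta]$, with $t_\delta=s+\tfrac{\Q(R)+2b_1}{2H\delta}$, there must be a time at which $-\ddt\Phi\leq 2H\delta$, and a continuation argument (this is the actual content of the mechanism in \cite[Lemma 2.7]{BP}, which you cite but instantiate quite differently) shows that this bound persists for all $t\geq t_\delta$; choosing $\delta$ so that $t_\delta=t$ gives \eqref{decayfw}. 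To complete your route you would have to exhibit the correction term explicitly and prove it bounded, overcoming the obstruction above; otherwise you should switch to the continuation mechanism.
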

\begin{proof}   Setting $$\E(t)= \e^{-2Ht}|A^{1/2}u(t) |^2 + |\pt u(t) |^2, \quad \Phi(t)=\E(t)+2\V(u(t)),$$ we rewrite \eqref{one} as
\begin{equation}
\label{onestar}
\ddt \Phi + 2H \E =0.
\end{equation}
Also note, as a consequence of \eqref{bound:V}, that
\begin{equation}
\label{tre}
\Phi(t) \geq -2b_1, \qquad \forall t \geq s.
\end{equation}
A further consequence of \eqref{bound:V} is that
\begin{equation}
\label{4} \Phi(s) = \e^{-2Hs}|A^{1/2}u_0 |^2 + |v_0 |^2  + 2
\V(u_0) \leq c(1+R+R^{2/q}):=\Q(R)
\end{equation}
whenever $  \E_{X_s}(z) \leq R $. \vskip2mm \noindent Let $\delta
>0$ be given and set  $$ t_\delta =s+ \frac{\Q(R)+2b_1}{2H\delta}.
$$ We will show that
\begin{equation}
\label{one12}
\ddt \Phi(t) \geq -2H\delta
\end{equation}
for every $t \geq t_\delta.$ We first show that there exists $t_0
\in [s,t_\delta]$ such that \eqref{one12} holds for $t=t_0$.
Suppose it were not so, then $\ddt \Phi(t)< -2H\delta $ on
$[s,t_\delta]$, which would imply, by integration,
$$
\Phi(t_\delta) <\Phi(s) -2H\delta (t_\delta - s) \leq  \Q(R) -
2H\delta \frac{\Q(R)+2b_1}{2H\delta} = -2b_1
$$
which contradicts \eqref{tre}.
Now, define
$$
t^\star = \sup\big\{ \tau \geq t_0 : \textrm{ \eqref{one12} holds } \forall t \in [t_0,\tau]\big\}.
$$
We show that $t^\star = + \infty$. Indeed, if it were not so we could  pick $t_n \downarrow t^\star$, for which $\ddt \Phi(t_n) < -2H\delta$. Consequently, for any given $\eps>0$, we can find  another sequence $\tau_n$, with $t^\star< \tau_n < t_n $, such that
$$
\Phi(\tau_n) > \Phi(t^\star) - \eps,
\qquad
\Phi(\tau_n) < \Phi(t_n).$$
This implies that $
 \Phi(t^\star) < \Phi(t_n) + \eps$, and henceforth, $\eps$ being arbitrary, $\Phi(t_n) -\Phi(t^\star) \geq 0$. In turn, this leads to $\ddt \Phi(t^\star)\geq 0$. By continuity of $\ddt \Phi$ (which is a consequence of \eqref{onestar} and the continuity properties of the solutions), $\ddt \Phi(t) \geq 0$ in a right neighborhood of $t^\star$, which contradicts maximality.

Hence, \eqref{one12} holds for all $t \geq t_\delta$:  inserting this in \eqref{onestar} yields immediately
$$
|\pt u(t)  |^2 \leq  \E(t) \leq \delta, \qquad \forall t \geq t_\delta.
$$
To get  \eqref{decayfw} for a given $t>s+1$,   it is sufficient to
choose $\delta =  \frac{\Q(R)+2b_1}{2H(t-s)}$, so that $t_\delta
=t$, and
$$
|\pt u(t)  |^2 \leq  \frac{\Q(R)+2b_1}{2H(t-s)}.
$$
On the other hand, if $0\leq t-s < 1$, integrating \eqref{onestar} gives
$$
|\pt u(t)|^2 \leq \Phi(t)+2b_1 \leq \Phi(s)+2b_1 \leq \Q(R) +
2b_1.
$$
Combining the last two inequalities yields \eqref{decayfw} with
$\Q_2(R)=(\Q(R)+2b_1)\max\{1,H^{-1}\}$.
 \end{proof}

\section{Finite-Dimensionality of the Global Attractor.}
We conclude the paper with a study of the fractal dimension of the
pullback attractor $\A$ of system (P) constructed above.

\subsection*{Fractal dimension}
 For a Banach space $W$, let $B_W$ denote the closed unit ball of $W$. For $\eps>0$, we call $\eps$-ball centered at $x \in W$ the set $B_W(\eps,x)=x+\eps B_W.$

If $K \subset W$ is compact, we use $ \mathcal N_\eps (K,W) $ to
indicate the minimum number of $\eps$-balls of $W$ which cover
$K$, and we define the \emph{fractal}
\emph{dimension} of $K$ as
$$
\dim_{W} K = \limsup_{\eps \to 0^+} \frac{\log \mathcal
N_\eps(K,W)}{\log \textstyle \frac1\eps}
$$
For more details on the fractal dimension (also known as the Minkowski or \emph{box-counting} dimension), we refer the reader to e.g.\ \cite{MAN,SCH} (see also \cite{TEM}).

\begin{remark}
\label{banspaceiso}Directly from the definition, it follows that  Banach space isomorphisms preserve
fractal dimension: more precisely, if $W,\widetilde W$ are two
Banach spaces, $K \subset W$ is compact and $J:W \to \widetilde W$
is a Banach space isomorphism, then $ \dim_{\widetilde W} J(K) =
\dim_W K. $
\end{remark}

The following abstract lemma (an adaptation of the generalized squeezing property  method \cite{FGMZ,GGMP} to our  framework)   is the main
technical tool we need in order to establish a bound on the
fractal dimension of $\A(t)$.

\begin{lemma}For $k \in \mathbb{N}$, let  $W_k,Z_k$ be two families of Banach spaces
satisfying \label{lemma:frac}
\begin{itemize}
\item[(i)] $Z_k \Subset W_k$;
\item[(ii)] for each $\eps>0$,  $\kappa_\eps= \sup_{k \geq 0}\mathcal N_{\eps}\big(B_{Z_{k}}(1,0),W_{k}\big) < \infty$.
\end{itemize}
Let $\B=\{\B_k \subset W_k\}_k$ be a family of sets with maps $U^k:
\B_k \to \B_{k-1}$, $k \geq 1$, fulfilling
\begin{itemize}
\item[(iii)] $\B_k$ is compact in $W_k$;
\item[(iv)] $ \sup_{k \geq 0}\|\B_k\|_{W_k} = Q_1 < \infty$;
\item[(v)] $U^k(\B_k) = \B_{k-1}$;
\item[(vi)] there exists a decomposition $U^k (z) = P^k (z) + N^k(z)$ and constants $\varrho<\frac14,Q_2>0$ such that
\begin{equation} \label{decay}
\|P^k( z^1)-P^k( z^2)  \|_{W_{k-1}} \leq \varrho \|z^1 -z^2
\|_{W_{k}},
\end{equation}
and
\begin{equation} \label{smoothing}
\|N^k( z^1)-N^k( z^2)  \|_{Z_{k-1}} \leq Q_2 \|z^1 -z^2
\|_{W_{k}},
\end{equation}
for every $z^1,z^2 \in \B_k$.
\end{itemize}
Then,
$$
\dim_{W_0} \B_0 \leq \frac{\log_2\kappa_{\varrho Q_2^{-1}}}{\log_2 \textstyle
\frac{1}{4\varrho}}.
$$
\end{lemma}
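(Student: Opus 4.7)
The plan is to bound the covering number $\mathcal N_\eps(\B_0,W_0)$ through a recursive one-step estimate. I will show that one application of $U^k$ converts a cover of $\B_k$ by balls of radius $r$ in $W_k$ into a cover of $\B_{k-1}$ by at most $\kappa:=\kappa_{\varrho/Q_2}$ times as many balls, each of $W_{k-1}$-radius $4\varrho r$. Iterating $K$ times starting from the trivial cover of $\B_K$ by one ball of radius $Q_1$ (possible by (iv)) will produce
\[
\mathcal N_{(4\varrho)^K Q_1}(\B_0,W_0)\;\le\;\kappa^K.
\]
Since the hypothesis $\varrho<\tfrac14$ forces $4\varrho<1$, the radii tend to zero; a standard interpolation over $\eps\in((4\varrho)^{K+1}Q_1,(4\varrho)^{K}Q_1]$ and passage to the $\limsup$ in the definition of fractal dimension will yield the announced bound.

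The key one-step statement is proved as follows. Suppose $\B_k$ is covered by $n$ balls of $W_k$-radius $r$. Discarding balls disjoint from $\B_k$ and applying the triangle inequality, we may replace them by $n$ balls of radius $2r$ centered at points $z_1,\ldots,z_n\in \B_k$ (this costs the first factor of $2$). Fix a center $z_0=z_i$ and any $z\in \B_k$ with $\|z-z_0\|_{W_k}\le 2r$; using the decomposition (vi),
\[
\|P^k(z)-P^k(z_0)\|_{W_{k-1}}\le 2\varrho r,\qquad N^k(z)-N^k(z_0)\in 2rQ_2\,B_{Z_{k-1}}.
\]
By hypothesis (ii), the set $2rQ_2\,B_{Z_{k-1}}$ can be covered in $W_{k-1}$ by $\kappa$ balls of radius $2rQ_2\cdot(\varrho/Q_2)=2\varrho r$. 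Combining the two contributions via the triangle inequality yields the second factor of $2$: the image $U^k(z)$ lies in one of $\kappa$ balls of $W_{k-1}$-radius $4\varrho r$. Summing over $i=1,\dots,n$ and invoking (v) gives the claim $\mathcal N_{4\varrho r}(\B_{k-1},W_{k-1})\le \kappa\,\mathcal N_r(\B_k,W_k)$.

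Iterating this inequality from $k=K$ down to $k=1$, starting from $\B_K\subset B_{W_K}(0,Q_1)$, yields $\mathcal N_{(4\varrho)^K Q_1}(\B_0,W_0)\le \kappa^K$. For $\eps\in((4\varrho)^{K+1}Q_1,(4\varrho)^{K}Q_1]$ we then have
\[
\frac{\log \mathcal N_\eps(\B_0,W_0)}{\log(1/\eps)}\;\le\;\frac{(K+1)\log_2\kappa}{K\log_2(1/(4\varrho))-\log_2 Q_1},
\]
and passing to $\limsup$ as $K\to\infty$ gives the stated estimate on $\dim_{W_0}\B_0$. The essential obstacle is purely one of bookkeeping: one must track precisely the two independent doublings (recentering the external cover inside $\B_k$, and summing the radii of the covers of $P^k$ and $N^k$) that together yield the factor $4$, and the assumption $\varrho<\tfrac14$ is exactly what is required for the iteration to be a genuine contraction.
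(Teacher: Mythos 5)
Your argument is correct and follows essentially the same route as the paper: the same one-step covering recursion $\mathcal N_{4\varrho r}(\B_{k-1},W_{k-1})\le \kappa_{\varrho Q_2^{-1}}\,\mathcal N_{r}(\B_k,W_k)$ obtained by combining the contraction of $P^k$ with the $Z_{k-1}\Subset W_{k-1}$ smoothing of $N^k$, iterated from the trivial cover of $\B_K$ and concluded by the same interpolation over $\eps$. The only cosmetic difference is where you charge the recentering factor of $2$ (on the incoming cover of $\B_k$ rather than on the outgoing cover of $\B_{k-1}$, as the paper does), which is immaterial.
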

\begin{proof}
Let $0<\eps<Q_2$ and $k \geq 1$ be fixed. By compactness,
$\B_k$ can be covered by a finite number $\eta_\eps=\eta_\eps(\B_k,W_k)$ of
$\eps$-balls $\{B_{W_k}(\eps,z^\imath)\}_{\imath=1}^{\eta_\eps}$
with $z^\imath \in \B_k$. Then, from  \eqref{smoothing}, we learn
that for any fixed $z \in \B_k$ there exists $z^\imath$ such that
$$
 \|N^k (z)-N^k(z^\imath)  \|_{Z_{k-1}} \leq \eps Q_2  \implies N^k (z) \in B_{Z_{k-1}}(\eps Q_2,N^k(z^\imath)),
$$
so that $N^k(\B_k)$ is covered by the set of balls
$\big\{B_{Z_{k-1}}\big(\eps Q_2, N^k
(z^\imath)\big)\big\}_{\imath=1}^{\eta_\eps}$. Now, we cover each
ball in this set by a finite number of $\varrho\eps$-balls of
$W_{k-1}$. The minimum number of such balls is given by
$$
\mathcal N_{\rho\eps}\big(B_{Z_{k-1}}(\eps Q_2,0),W_{k-1}\big) =
\mathcal N_{\varrho Q_2^{-1}}\big(B_{Z_{k-1}}(1,0),W_{k-1}\big) =
\kappa_{\varrho Q_2^{-1}}:=\kappa.
$$
Hence, there exists a collection $\{B_{W_{k-1}}(\varrho\eps,
y^{\imath,\jmath})\}_{\imath,\jmath=1}^{\eta_\eps,\kappa}$, with
$y^{\imath,\jmath}\in W_{k-1}$, covering $N^k(\B_k)$. This means that, for any fixed $z \in B_{W_k}(\eps,z^{\imath})$, there exist $\jmath $, $y^{\imath,\jmath}$ such that
$$
\|U^k(z) -(y^{\imath,\jmath}+P^k(z^\imath)) \|_{W_{k-1}} \leq
\|N^k(z) -y^{\imath,\jmath}\|_{W_{k-1}}+\| P^k(z)-
P^k(z^\imath)\|_{W_{k-1}} \leq 2\varrho  \eps,
$$
i.e.\ if $\beta:=4\varrho <1$,  $U^k (z) \in
B_{W_{k-1}}(\frac{\beta\eps}{2},y^{\imath,\jmath}+P^k(z^\imath))$,
so that
$$
U^k \B_k \subset \bigcup_{\imath=1}^{\eta_\eps}
\bigcup_{\jmath=1}^{\kappa} B^{\imath,\jmath}, \qquad  B^{\imath,\jmath}=B_{W_{k-1}}\big(\textstyle
\frac{\beta\eps}{2},y^{\imath,\jmath}+P^k(z^\imath)\big).
$$

It might happen that some  $y^{\imath,\jmath}+P^k(z^\imath)$  does not belong to $\B_{k-1}$. In this case, choose $\tilde y \in \B_{k-1} \cap B^{\imath,\jmath}$ and replace $B^{\imath,\jmath}$ with $B_{W_{k-1}}\big(\textstyle
\beta\eps,\tilde y \big)$. Therefore, a system of $\kappa \eta_\eps $ $\beta   \eps$-balls of $W_{k-1}$ is sufficient to cover $\B_{k-1}$, and we
can estimate
\begin{equation} \label{recurrent}
\log_2 \mathcal N_{\beta \eps}(\B_{k-1},W_{k-1}) \leq \log_2 \kappa
+ \log_2 \mathcal N_{\eps}(\B_{k},W_{k}).
\end{equation}
We then learn from (iv) that  $\mathcal N_{Q_1} ( \B_k,W_k)=1$ for
every $k \geq 0$, and,  using \eqref{recurrent} $k$ times starting
from $\B_k$, obtain that
$$
\log_2 \mathcal N_{\beta^k Q_1}(\B_0,W_{0}) \leq k\log_2\kappa  +
\log_2 \mathcal N_{Q_1}(\B_{k},W_{k}) = k\log_2\kappa.
$$
Hence, if $\eps>0$ is arbitrary and $k$ is chosen so that
$\beta^kQ_1 \leq \eps \leq \beta^{k-1} Q_1$, we have
$$
\log_2 \mathcal N_{\eps}(\B_0,W_{0}) \leq \log_2 \mathcal
N_{\beta^kQ_1}(\B_0,W_{0}) \leq   k\log_2 \kappa \leq
\big(\log_2\textstyle \frac 1\beta\big)^{-1}\log_2 \big( \textstyle
\frac{Q_1}{\beta\eps} \big) \log_2 \kappa.
$$
Dividing by $\textstyle \log_2 \frac1\eps,$ rearranging and letting $\eps \to 0$ yields the claim.\end{proof}

We are now ready to state and prove the main result of the
section, that is, an upper bound on the fractal dimension of  the
sections $\A(t)$ of the global attractor $\A$ constructed in
Sect.\ 3.

\begin{theorem} \label{fract:thm} Assume, in addition to \emph{(H0)-(H1)}, that $\varphi \in \C^2 (\R)$ and
\begin{equation} \label{phisecond}
|\varphi''(y)| \leq c(1+|y|^\vartheta), \qquad \vartheta= \max\{q-3,0\}.
\end{equation}
Then, the global attractor $\A=\{\A(t)\}_{t \in \R}$ has finite fractal
dimension in $X_t$ for every $t \in \R$, i.e.\
$$
\dim_{X_t} \A(t)\ \leq h_1(t), \qquad  \forall  t \in \R.
$$
Moreover, the positive increasing function $h_1$ depends only on
the physical parameters of the problem  and can be explicitly
computed.
\end{theorem}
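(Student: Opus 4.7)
The plan is to apply the abstract Lemma~\ref{lemma:frac} to a discretization of the backward orbit of the attractor. Fix $t \in \R$ and a time step $\tau>0$ to be chosen later; set $t_k = t - k\tau$ for $k \geq 0$, $W_k = X_{t_k}$, $Z_k = Y_{t_k}$, $\B_k = \A(t_k)$, and $U^k = S(t_{k-1}, t_k)$. Invariance of the attractor gives $U^k(\B_k) = \B_{k-1}$, while Remark~\ref{improvebound} provides the uniform bound $\sup_{k \geq 0}\|\A(t_k)\|_{X_{t_k}}^2 \leq R_\AA$, covering hypothesis (iv); compactness of $\B_k$ in $W_k$ is a consequence of Theorem~\ref{osc:attractor}. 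The uniform covering-number hypothesis (ii) is the first nontrivial check: the Poincar\'e-type inequality $|A^{1/2}u| \leq (2\pi)^{-1}|Au|$ on zero-mean periodic functions, together with the mean--oscillation decomposition, shows that the unit ball of $Y_{t_k}$ has $X_{t_k}$-diameter controlled independently of $k$, and a standard compactness-by-discretization argument yields the bound $\kappa_\eps < \infty$.

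Next, I introduce the decomposition that will verify hypothesis (vi). For $z^\imath \in \A(t_k)$, $\imath=1,2$, call $(u^\imath,\pt u^\imath)$ the corresponding trajectories and $\bar u = u^1-u^2$. The difference satisfies
\begin{equation*}
\ptt \bar u + H\pt \bar u + \e^{-2Ht}A\bar u + \bar u = \bar u - \bigl(\varphi(u^1)-\varphi(u^2)\bigr), \qquad t \geq t_k,
\end{equation*}
which I split as $\bar u = \bar p+\bar n$, where $\bar p$ solves the homogeneous linear problem with initial datum $z^1-z^2$ at $t_k$ and $\bar n$ solves the inhomogeneous version with zero initial datum. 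Correspondingly, define $P^k(z^\imath)$ as the solution of the linear equation with IC $z^\imath$ evaluated at $t_{k-1}$, and $N^k(z^\imath) = U^k(z^\imath) - P^k(z^\imath)$, so that $P^k(z^1)-P^k(z^2)$ and $N^k(z^1)-N^k(z^2)$ coincide respectively with $(\bar p,\pt\bar p)$ and $(\bar n,\pt\bar n)$ at time $t_{k-1}$.

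For the contraction estimate on $P^k$, I would adapt the functional $\Lambda_1$ of Theorem~\ref{wp:th} to the linearized setting (replacing $\varphi$ by the identity), deriving
\begin{equation*}
\E_{X_{t_{k-1}}}\bigl(\bar p(t_{k-1}),\pt\bar p(t_{k-1})\bigr) \leq C\,\e^{-\mu_2\tau}\,\E_{X_{t_k}}(z^1-z^2),
\end{equation*}
with $C,\mu_2>0$ independent of $k$; choosing $\tau$ so large that $C\e^{-\mu_2\tau} < (1/4)^2$ yields the required $\varrho < 1/4$. For the smoothing estimate on $N^k$, I would follow the scheme of Lemma~\ref{lemma:Q}, multiplying the equation for $\bar n$ by $A\pt\bar n$ and controlling the forcing $\bar u-(\varphi(u^1)-\varphi(u^2))$ in the $H^1$-seminorm. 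Here the additional assumption \eqref{phisecond} is decisive: combined with the a priori bound $\|(u^\imath,\pt u^\imath)\|_{Y_{t_{k-1}}}^2 \leq h(t_{k-1}) \leq h(t)$ from Theorem~\ref{osc:attractor} and Agmon's inequality, it provides a control of $\varphi(u^1)-\varphi(u^2)$ in $H^1$ by a multiple (depending on $t$ but not on $k$) of $\bar u$ in $H^1$, and hence yields $\|N^k(z^1)-N^k(z^2)\|_{Y_{t_{k-1}}} \leq Q_2(t)\|z^1-z^2\|_{X_{t_k}}$.

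The main obstacle is matching the norms in the contraction estimate: the inequality for $\varrho$ must be measured in the $X_{t_{k-1}}$-norm, whose gradient weight $\e^{-Ht_{k-1}}$ is a factor $\e^{H\tau}$ larger than the corresponding weight $\e^{-Ht_k}$ in $X_{t_k}$, so the energy decay must overcome this systematic mismatch. Forcing $\tau$ to be large enough to absorb both the mismatch and deliver $\varrho<1/4$ is the heart of the argument. Once both the contraction and smoothing estimates are in place, Lemma~\ref{lemma:frac} yields the bound $\dim_{X_t}\A(t) \leq \log_2\kappa_{\varrho Q_2(t)^{-1}}/\log_2(1/4\varrho)$, and tracking the explicit dependence of $\tau$, $\varrho$ and $Q_2(t)$ on $t$ produces the announced continuous increasing function $h_1(t)$.
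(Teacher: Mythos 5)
Your overall architecture is the same as the paper's (apply Lemma~\ref{lemma:frac} to the backward discretization $\B_k=\A(t-kt_\star)$ with a linear/nonlinear splitting of the difference of two trajectories, contraction for the linear part, smoothing for the rest), but two of the steps you treat as routine are precisely where the paper has to do real work, and the step you single out as ``the heart of the argument'' is not actually an obstacle. First, the weight mismatch: since $t_{k-1}=t_k+t_\star$, the gradient weight $\e^{-Ht_{k-1}}$ is \emph{smaller} than $\e^{-Ht_k}$ (you state the opposite), and in fact for the free linear equation on zero-mean functions one has the exact identity $\ddt\|(\bar p,\pt\bar p)\|^2_{\widetilde X_t}+2H\|(\bar p,\pt\bar p)\|^2_{\widetilde X_t}=0$, so the contraction in the time-dependent norm is $\e^{-2Ht_\star}$ with no constant in front and nothing to absorb; the paper simply takes $t_\star=3H^{-1}\log 2$ to get $\varrho=1/8$. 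Second, and this is the genuine gap, you cannot run the argument directly in $W_k=X_{t_k}$, $Z_k=Y_{t_k}$: (a) the $X_t$-norm contains $\|u\|_{L^q}$, while the energy estimates you invoke (the $\Lambda_1$ machinery of Theorem~\ref{wp:th}) control $\|\cdot\|_{L^q}^q$; for differences of solutions this breaks homogeneity, and an inequality $\E_{X_{t_{k-1}}}(\bar w)\leq\varrho'\,\E_{X_{t_k}}(\bar z)$ does \emph{not} yield the Lipschitz contraction $\|\bar w\|_{X_{t_{k-1}}}\leq\varrho\|\bar z\|_{X_{t_k}}$ required by \eqref{decay}; (b) hypothesis (ii) of Lemma~\ref{lemma:frac} demands covering numbers uniform in $k$ as $t_k\to-\infty$, and since the $L^q$ and $L^2$ components of $X_{t_k},Y_{t_k}$ are unweighted while the $H^1,H^2$ components carry $\e^{-Ht_k}$, no single rescaling normalizes the unit balls of $Y_{t_k}$ and $X_{t_k}$ simultaneously, so your ``standard compactness-by-discretization argument'' does not obviously give a $k$-independent $\kappa_\eps$; (c) the zero mode of $A$ (constants, present under periodic boundary conditions) is not contracted by the free linear flow.

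The paper resolves all three difficulties with one device that is absent from your proposal: it splits off the spatial mean $\hat z\in\R^2$ and works in the auxiliary spaces $W_k=\R^2\times\widetilde X_{t_0-kt_\star}$, $Z_k=\R^2\times\widetilde Y_{t_0-kt_\star}$, whose norms are purely quadratic and carry the weight only on the gradient terms. Then the isomorphisms $J_k(\hat z,(\tilde u,\tilde v))=(\hat z,(\e^{-kt_\star H}\tilde u,\tilde v))$ map unit balls to unit balls in both scales, giving (ii) for free; the contraction \eqref{decay} is a genuine norm estimate because no $L^q$ term appears; and the mean, for which only the exponential \emph{growth} bound of Lemma~\ref{lemma:mean} holds, is routed entirely into the compact part $N^k$, where a mere Lipschitz bound into the finite-dimensional factor $\R^2$ suffices for \eqref{smoothing}. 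The dimension bound is transferred back to $X_{t_0}$ only at the very end, via the Banach-space isomorphism $J_{t_0}(\hat z,\tilde z)=\hat z+\tilde z$ and Remark~\ref{banspaceiso}. Without this reduction, the contraction and covering-number steps of your plan do not go through as written.
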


\subsection*{Proof of Theorem \ref{fract:thm}.}
 To be in  position to apply Lemma~\ref{lemma:frac}, we need to  establish a suitable smoothing property  for the process $S(\cdot,\cdot)$ restricted to the family $\A$. To do so, we need to  separate the mean value of the solution $u=u(x,t)$ on $(0,1)$.  From now on, we write $\hat f $ to indicate the mean value of $f \in L^1(0,1) $ and use the notation $\tilde f = f-\hat f$.
\vskip2mm Let  $s \in \R$, and $z=(u_0,v_0) \in \A(s)$. We
decompose $S(t,s)z $ as
$$
S(t,s) z = (u(t),\pt u (t)) = \widehat S_z(t,s) + \widetilde
S_z(t,s)  , \quad t \geq s,
$$
where $\widehat S_z(t,s) =(\widehat u(t),\widehat{ \pt u} (t))$,
and  $\widetilde S_z(t,s) = S(t,s)z -\widehat S_z(t,s).$

Denote with $\widetilde {L^2}$ [resp.\ $\widetilde{
{H}^i_{\textrm{per}}}$, $i=1,2$]  the subspace of functions of $L^2(0,1)$
[resp.\ $H^i_{\textrm{per}}(0,1)$] with zero  mean value. We will
look at the evolution of the zero mean part of the solution in the
families of Banach spaces ($t \in \R$)
\begin{align*} &\widetilde X_t = \widetilde{ {H}^1_{\textrm{per}}} \times {\widetilde {L^2}}, & & \|(u,v)\|_{\widetilde X_t}^2 =
\e^{-2Ht}|A^{1/2}u|^2+|v|^2,
\\
&\widetilde Y_t = \widetilde{{H}^2_{\textrm{per}}}  \times\widetilde{ {H}^1_{\textrm{per}}} , &
& \|(u,v)\|_{\widetilde Y_t}^2  = \e^{-2Ht}|Au|^2+|A^{1/2}v|^2,
\end{align*}

\begin{lemma} \label{lemma:mean} Let $t_0 \in \R,t_\star>0$ be fixed,  $s\leq t_0-t_\star$, $z^1,z^2 \in \A(s)$. We have the estimate
\begin{equation} \label{smooth:mean}
\|\widehat S_{z^1} (s+t_\star,s) - \widehat S_{z^2} (s+t_\star,s)
\|^2_{\R^2} \leq c\e^{C_0t_\star}\|  \hat z^1- \hat z^2
\|_{\R^2}^2,
\end{equation}
where  $C_0>0$ depends only on the physical parameters of the
problem.
\end{lemma}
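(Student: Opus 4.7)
The plan is to reduce the evolution of the mean value to a non-autonomous linear ODE in $\R^2$ and estimate it by Gronwall/Duhamel. Integrating equation \eqref{SYS} over $(0,1)$ and using periodic boundary conditions to eliminate $\int_0^1 Au\,\d x = 0$, one obtains the scalar second-order ODE
$$
\ddot{\hat u}(t) + H\dot{\hat u}(t) + \int_0^1 \varphi(u(x,t))\,\d x = 0,\qquad t\ge s,
$$
for the mean $\hat u(t)$. Writing $\bar w(t) = \widehat S_{z^1}(t,s) - \widehat S_{z^2}(t,s) = (\bar u(t),\bar v(t))$ and setting $\Phi(t)=\int_0^1[\varphi(u^1(x,t))-\varphi(u^2(x,t))]\,\d x$, the difference of the two mean trajectories satisfies the two-dimensional linear system
$$
\dot{\bar u}=\bar v,\qquad \dot{\bar v}+H\bar v = -\Phi(t),\qquad \bar w(s)=\hat z^1-\hat z^2,
$$
whose homogeneous semigroup is uniformly bounded on $[0,\infty)$ since the matrix $\bigl(\begin{smallmatrix}0&1\\0&-H\end{smallmatrix}\bigr)$ has eigenvalues $0$ and $-H$.

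The main analytic step is to control $|\Phi(t)|$ pointwise on $[s,s+t_\star]\subset(-\infty,t_0]$. By the mean value theorem and assumption (H1),
$$
|\Phi(t)|\le c\bigl(1+\|u^1(t)\|_{L^\infty}^{q-2}+\|u^2(t)\|_{L^\infty}^{q-2}\bigr)\,|u^1(t)-u^2(t)|.
$$
Invariance of $\A$ together with Theorem~\ref{osc:attractor} forces $S(\tau,s)z^i\in \A(\tau)\subset Y_\tau$ on $[s,s+t_\star]$, so $|Au^i(\tau)|\le \e^{H\tau}\sqrt{h(\tau)}$; combined with Agmon's inequality and the ceiling $\tau\le t_0$, this yields a uniform $L^\infty$-bound $\|u^i(\tau)\|_{L^\infty}\le M(t_0)$, whence $|\Phi(t)|\le c(t_0)\,|u^1(t)-u^2(t)|$. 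The continuous dependence estimate \eqref{continuous:dep} then controls this by $c(t_0)\,\|z^1-z^2\|_{X_s}$, since on the bounded interval $[s,s+t_\star]\subset(-\infty,t_0]$ the factor $\exp(\e^{H\sigma t}-\e^{H\sigma s})$ is bounded in terms of $t_0$ and $t_\star$.

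The conclusion follows by Duhamel's formula
$$
\bar v(t)=\e^{-H(t-s)}\bar v(s)-\int_s^t \e^{-H(t-\tau)}\Phi(\tau)\,\d\tau,\qquad \bar u(t)=\bar u(s)+\int_s^t \bar v(\tau)\,\d\tau,
$$
combined with Gronwall's inequality applied to $\|\bar w(\tau)\|_{\R^2}$ on $[s,s+t_\star]$, producing the exponential factor $\e^{C_0 t_\star}$ with $C_0$ depending only on $M(t_0)$ and the constants in \eqref{continuous:dep}. The main obstacle is that the forcing $\Phi(t)$ couples to the zero-mean difference $\tilde u^1-\tilde u^2$ and not only to $\bar u$: to close the estimate cleanly against $\|\hat z^1-\hat z^2\|_{\R^2}$ alone, one must either reabsorb the zero-mean contribution through a companion smoothing estimate exploiting the spectral gap of $A$ on $\widetilde{L^2}$ (presumably the content of the subsequent lemma in Section~6) or let the constant $C_0$ absorb the full Lipschitz dependence on $\|z^1-z^2\|_{X_s}$ by using the uniform bound $\|z^i\|_{X_s}\le R_\AA$ available on $\A(s)$.
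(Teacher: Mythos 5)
Your reduction of the mean to the ODE $\ddot{\hat u}+H\dot{\hat u}+\int_0^1\varphi(u)\,\d x=0$ is the same starting point as the paper's, and your $L^\infty$ bound along trajectories of $\A$ is fine (the paper gets it more cheaply from \eqref{boundX} and Agmon's inequality, without invoking the $Y_t$-regularity). But the argument does not close: what you actually establish is a bound of $\|\widehat S_{z^1}-\widehat S_{z^2}\|_{\R^2}$ by $c(t_0,t_\star)\,\|z^1-z^2\|_{X_s}$, whereas \eqref{smooth:mean} asks for a bound by $\|\hat z^1-\hat z^2\|_{\R^2}$ alone. You correctly identify this as the obstacle, but neither of your proposed escapes delivers the lemma: the first (a companion smoothing estimate for the zero-mean part) is not carried out and would at best change the right-hand side to $\|\hat z^1-\hat z^2\|^2_{\R^2}+\|\tilde z^1-\tilde z^2\|^2_{\widetilde X_s}$; the second (absorbing the dependence on $\|z^1-z^2\|_{X_s}$ into $C_0$ via $\|z^\imath\|_{X_s}\le R_\AA$) is not available, since replacing $\|z^1-z^2\|_{X_s}$ by $2R_\AA$ produces an additive constant rather than a multiple of $\|\hat z^1-\hat z^2\|^2_{\R^2}$, and the resulting non-Lipschitz estimate is useless for the squeezing argument of Lemma \ref{lemma:frac}.

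For comparison, the paper closes the estimate by writing the nonlinear forcing as $g_1(t)m(t)$ with $g_1(t)=\int_0^1\varphi'(\xi(x,t))\,\d x$, so that $m=\hat u^1-\hat u^2$ solves the linear homogeneous equation $m''+Hm'+g_1m=0$; the coefficient is bounded by $c(1+R_{\AA}^{(q-2)/q})$ via (H1), H\"older and \eqref{boundX}, and Gronwall applied to $\Lambda_m=(m')^2+\frac{H^2}{4}m^2+\frac H2 mm'$ yields \eqref{smooth:mean}. That step rests on the factorization $\widehat{\varphi(u^1)}-\widehat{\varphi(u^2)}=g_1(t)\,m(t)$, i.e.\ on the mean of $\varphi'(\xi)(u^1-u^2)$ splitting as the product of the two means; your observation that the forcing genuinely couples to $\tilde u^1-\tilde u^2$ is precisely the point where this factorization does the work, and if one does not grant it the natural repair is the weaker inequality with $\|\tilde z^1-\tilde z^2\|^2_{\widetilde X_s}$ added on the right, which is all that the application in the proof of Theorem \ref{fract:thm} requires (there $\widehat S_z$ is placed in the $N^k$ component, whose estimate \eqref{smoothing} is against the full $\|z^1-z^2\|_{W_k}$). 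Your write-up, however, stops short of proving either version, so as it stands the lemma is not established.
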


\begin{proof}
Fix $s \leq t_0-t_\star$ and assume $t \in[s,s+t_\star]$. For $\imath=1,2$, let  $
z^\imath=(u_0^\imath,v_0^\imath) \in \A(s)$, and
$S(t,s)z^\imath=(u^\imath(t),\pt u^\imath(t)).$ We begin by
observing that the difference $  (m (t), m'(t))= \widehat S_{z^1}
(t,s) - \widehat S_{z^2} (t,s)$ fulfills the Cauchy problem on
$(s,s+t_\star)$
$$
m''(t) + H  m'(t)  + g_1(t)m(t)=0,\qquad m(s)= \hat u_{0}^1-\hat
u_0^2,\; m'(s)= \hat v_{0}^1-\hat v_0^2,
$$
with $$g_1(t) = \int_0^1 \varphi'(\xi(x,t)) \d x,\quad \min\{u^1(x,t),u^2(x,t)\} \leq \xi(x,t) \leq \max\{u^1(x,t),u^2(x,t)\} . $$ Now, set
$\Lambda_{m} = (m')^2 + \frac{H^2}{4}m^2 +\frac H2 mm' $. It is
immediate to determine the differential inequality
$$
\ddt \Lambda_m = -\frac 32 H (m')^2-2g_1m\big(m'+\frac H4 m\big) \leq c|g_1|(m^2 +
(m')^2) \leq c(1+R_{\AA}^{\frac{q-2}{q}}) \Lambda_m.
$$
Here, using (H1) and \eqref{boundX}, we have written
$$
|g_1(t)|  \leq \|\varphi'(\xi(t))\|_{L^{\frac{q}{q-2}}} \leq
c(1+\|u^1(t)\|^{q-2}_{L^q} +\|u^2(t)\|^{q-2}_{L^q}) \leq
c(1+R_{\AA}^{\frac{q-2}{q}}) ).
$$
An application of Gronwall's lemma on $(s,t)$ then leads to
\begin{equation} \label{smooth:meanstar}
|m'(t)|^2+ |m(t)|^2 \leq c\Lambda_m(t) \leq c\e^{C_0 (t-s)}
\Lambda_m(s) \leq  c \e^{C_0 t_\star }\|  \hat z^1- \hat z^2
\|_{\R^2}^2,
\end{equation}
for $C_0=c(1+R_{\AA}^{\frac{q-2}{q}})$. Finally, writing
\eqref{smooth:meanstar} for $t=s+t_{\star}$ gives
\eqref{smooth:mean}.
\end{proof}

\begin{lemma} \label{frac:dec}
Let $t_0 \in \R,t_\star>0$ be fixed,  $s\leq t_0-t_\star$. There
exists a decomposition
$$
\widetilde S_z(t,s) = P(t,s)[z] + N(t,s)[z], \qquad z \in \A(s),
$$
satisfying, for every $z^1,z^2 \in \A(s)$,
\begin{equation} \label{smooth:decay}
\|P(s+t_\star,s)[z^1] - P(s+t_\star,s)[z^2] \|^2_{\widetilde
X_{s+t_\star}} \leq \e^{-2Ht_\star}\|  \tilde z^1- \tilde z^2
\|_{\widetilde X_{s}}^2
\end{equation}
and
\begin{equation} \label{smoooth}
\|N(s+t_\star,s)[z^1] - N(s+t_\star,s)[z^2] \|^2_{\widetilde
Y_{s+t_\star}} \leq  C_{t_0, t_\star} \big[\|  \hat z^1- \hat z^2
\|_{\R^2}^2 +\|  \tilde z^1- \tilde z^2 \|_{\widetilde X_{s}}^2
\big]
\end{equation}
where  $C_{t_0,t_\star}>0$ depends only on $t_0,t_\star$ and on
the physical parameters of the problem.
\end{lemma}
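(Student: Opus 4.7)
The plan is a linear/nonlinear splitting of the zero-mean part of the trajectory, in the spirit of the decomposition used in the proof of Theorem~\ref{osc:attractor} but now quantified for \emph{differences} of two trajectories emanating from $\A(s)$. Given $z=(u_0,v_0)\in\A(s)$ with $(u(t),\pt u(t))=S(t,s)z$, I let $p=p[z]$ solve the \emph{linear} homogeneous equation
$$
\ptt p+H\pt p+\e^{-2Ht}Ap=0,\qquad p(s)=\tilde u_0,\quad \pt p(s)=\tilde v_0,
$$
and set $P(t,s)[z]:=(p(t),\pt p(t))$ and $N(t,s)[z]:=\widetilde S_z(t,s)-P(t,s)[z]=:(n(t),\pt n(t))$. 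Subtracting, the remainder satisfies the inhomogeneous linear problem $\ptt n+H\pt n+\e^{-2Ht}An=-\widetilde{\varphi(u)}$ with vanishing initial data.

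For $z^1,z^2\in\A(s)$, the difference $p^1-p^2$ solves the same homogeneous equation with initial data $\tilde z^1-\tilde z^2$. Testing with $\pt(p^1-p^2)$ and treating the cross-term exactly as in the derivation of \eqref{one}, one obtains, for $E(t):=\e^{-2Ht}|A^{1/2}(p^1-p^2)|^2+|\pt(p^1-p^2)|^2=\|P(t,s)[z^1]-P(t,s)[z^2]\|^2_{\widetilde X_t}$, the identity $\ddt E+2HE=0$, hence $E(s+t_\star)=\e^{-2Ht_\star}E(s)$, which is \eqref{smooth:decay}.

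For the smoothing estimate, the difference $\bar n=n^1-n^2$ obeys $\ptt\bar n+H\pt\bar n+\e^{-2Ht}A\bar n=-\widetilde{\varphi(u^1)-\varphi(u^2)}$ with zero initial data. Testing against $A\pt\bar n$ (which has zero mean, so the tilde on the forcing can be dropped) and integrating by parts once, I get
$$
\ddt\Lambda_n+2H\Lambda_n=-2\l\varphi'(u^1)\partial_x\bar u+(\varphi'(u^1)-\varphi'(u^2))\partial_x u^2,\,\partial_x\pt\bar n\r,
$$
where $\Lambda_n(t)=\e^{-2Ht}|A\bar n|^2+|A^{1/2}\pt\bar n|^2=\|N(t,s)[z^1]-N(t,s)[z^2]\|^2_{\widetilde Y_t}$ and $\bar u=u^1-u^2$. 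Using \eqref{phisecond}, Agmon's inequality and the attractor regularity $\|\A(\tau)\|_{Y_\tau}^2\leq h(\tau)\leq h(t_0)$ on $\tau\in[s,s+t_\star]$ (decisive input from Theorem~\ref{osc:attractor} plus monotonicity of $h$), the $L^\infty$-norms of $u^i,\varphi'(u^i),\varphi''(u^i)$ as well as $|A^{1/2}u^i|$ are all bounded by a constant $F=F(t_0)$, and Young's inequality absorbs one copy of $|A^{1/2}\pt\bar n|^2$ into the dissipative term, yielding
$$
\ddt\Lambda_n+H\Lambda_n\leq F(t_0)\bigl(|A^{1/2}\bar u|^2+\|\bar u\|_{L^\infty}^2\bigr).
$$

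To close, I control the right-hand side via the continuous dependence estimate \eqref{continuous:dep} with $R=R_\AA$ (permitted since $\A(s)\subset\AA(s)$ by \eqref{boundX}): for $\tau\in[s,s+t_\star]\subset(-\infty,t_0]$ the exponential factor $\exp\bigl((\tau-s)+\e^{H\sigma\tau}-\e^{H\sigma s}\bigr)$ is dominated by $\e^{t_\star+\e^{H\sigma t_0}}$, so $\E_{X_\tau}(\bar u(\tau),\pt\bar u(\tau))\leq C_{t_0,t_\star}\E_{X_s}(z^1-z^2)$. The one-dimensional chain of embeddings $H^1\hookrightarrow L^\infty\hookrightarrow L^q$ together with the uniform $H^1$-bound on $\A(s)$ then gives $\E_{X_s}(z^1-z^2)\leq c\bigl(\|\hat z^1-\hat z^2\|_{\R^2}^2+\|\tilde z^1-\tilde z^2\|^2_{\widetilde X_s}\bigr)$, and integrating Gronwall on $[s,s+t_\star]$ with $\Lambda_n(s)=0$ produces \eqref{smoooth}. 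The main obstacle is not any single computation but the careful bookkeeping of the $\tau$-dependent prefactors (powers of $\e^{\pm H\tau}$ from $\Lambda_n$, from the $Y_\tau$-bound, and from the continuous dependence); these collapse to a finite constant $C_{t_0,t_\star}$ precisely because $\tau\leq t_0$ and $h$ is monotone.
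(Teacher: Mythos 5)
Your decomposition is exactly the paper's (same linear part $P$, same remainder $N$ forced by $\widehat{\varphi(u)}-\varphi(u)$), and your proof of \eqref{smooth:decay} is identical: multiply by $\pt(p^1-p^2)$, obtain $\ddt E+2HE=0$, conclude. For \eqref{smoooth}, however, you take a genuinely different route. The paper never invokes the continuous dependence estimate \eqref{continuous:dep}; instead it controls the difference $u^1-u^2$ by writing it as $m+\bar p+\bar n$, where $m$ is handled by Lemma~\ref{lemma:mean}, $\bar p$ by the exponential decay \eqref{expdecayfd}, and $\bar n$ by an \emph{intermediate} Gronwall estimate at the $\widetilde X_t$ level (multiplication by $\pt\bar n$, giving \eqref{INTER-TILDE}); only then does it multiply by $A\pt\bar n$ and feed these three bounds into the $\widetilde Y_t$-level inequality. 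You skip the intermediate step entirely and bound $|A^{1/2}\bar u|^2$ and $\|\bar u\|_{L^\infty}^2$ directly through \eqref{continuous:dep} applied with $R=R_\AA$ (legitimate, since $\A(s)\subset\AA(s)$ by \eqref{boundX}, and the exponential factor is uniformly controlled because $\tau\le s+t_\star\le t_0$), together with the elementary observation that $\E_{X_s}(z^1-z^2)$ is dominated by $\|\hat z^1-\hat z^2\|_{\R^2}^2+\|\tilde z^1-\tilde z^2\|_{\widetilde X_s}^2$ up to a $t_0$-dependent constant (the $L^q$ term being absorbed via the $L^\infty$ bound on the attractor). Your version is shorter and reuses an estimate already proved in Theorem~\ref{wp:th}; the paper's bootstrap is self-contained within the attractor machinery and keeps the constants expressed purely in terms of $R_\AA$, $t_0$, $t_\star$ rather than through $\Q_1(R)$. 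Both yield a constant of the admissible form $C_{t_0,t_\star}$, so your argument is a valid alternative proof; just make sure, when you later assemble Theorem~\ref{fract:thm}, that you still prove Lemma~\ref{lemma:mean} separately, since the mean-value estimate \eqref{smooth:mean} is needed there for the $\widehat S$ component of $N^k$ and is not subsumed by your version of this lemma.
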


\begin{proof}We will again use the shorthand $\varpi(t)=\e^{-2Ht}$, and repeatedly exploit the inequality
\begin{equation} \label{agmonfinal}
\|u \|^2_{L^\infty} \leq c |u||A^{1/2} u| \leq c R_\AA \varpi^{-1/2},
\end{equation}
which holds for every trajectory $u=u(\cdot)$ on the attractor $\A(\cdot)$. Here, $R_\AA$ is the radius of the absorbing set specified in \eqref{radius}.

  Throughout, assume  $t \in [s,s+t_\star]$.
We decompose $$ \widetilde S_z(t,s)= P(t,s)[z] + N(t,s)[z] =  (
p(t),  \pt p(t) ) + (n(t), \pt n(t) ) $$ where
\begin{equation}
\label{SYS-TILDE1}
\begin{cases}
\displaystyle
\ptt p(t)+H  \pt p (t) + \e^{-2Ht} A  p(t)= 0,  \\
p(s)= \tilde u_0  , \pt  p(s) =  \tilde v_0,
\end{cases}
\end{equation}
and
\begin{equation}
\label{SYS-TILDE2}
\begin{cases}
\displaystyle
\ptt  n(t)+H  \pt   n(t) + \e^{-2Ht} A    n(t)= \widehat{\varphi(u(t))} - \varphi(u(t)),   \\
 n(s)= 0  , \pt  n(s) = 0.
\end{cases}
\end{equation}
The usual multiplication of \eqref{SYS-TILDE1} by  $\pt p$ and
Gronwall's lemma on $(s,t)$ give
\begin{equation} \label{expdecayfd}
\| P(t,s)[z]\|_{\widetilde X_{t}}^2 \leq
\e^{-2H(t-s)}\|  \tilde z \|_{\widetilde X_{s}}^2.
\end{equation}
Using that $z \mapsto P(t,s)[z]$ is linear, \eqref{smooth:decay} follows from \eqref{expdecayfd} written for $t=s+t_\star,z=  z^1 -  z^2$. \vskip2mm \noindent We turn to the
difference $(\bar n(t), \pt \bar n (t)) =  N(t,s)[z^1] -
N(t,s)[z^2]$, which is a solution to
\begin{equation}
\label{SYS-TILDE}
\begin{cases}
\displaystyle
\ptt \bar n(t)+H  \pt \bar n(t) + \e^{-2Ht} A \bar  n(t)= g_2(t)+g_3(t),   \\
\bar n(s)= 0 , \pt \bar n(s) =  0.
\end{cases}
\end{equation}
with (here, $\xi$  is as above) $$
g_2= \widehat{\varphi(u^1)}-\widehat{\varphi(u^2)}= \int_0^1 \varphi'(\xi(x,\cdot))(u^1(x,\cdot)-u^2(x,\cdot)) \, \d x  , \quad g_3=
-(\varphi(u^1)-\varphi(u^2)).$$
\noindent
 We first multiply \eqref{SYS-TILDE} by $\pt \bar n$ and obtain the differential inequality
\begin{equation}
\label{SYS-TILDE0} \ddt  \| (\bar n,\pt \bar n)\|^2_{\widetilde
X_t} + 2H | \pt \bar n|^2 \leq  2 \l g_2+g_3, \pt \bar n \r\leq
c\big[|g_2|^2+ |g_3|^2\big] + 2H | \pt \bar n|^2 .
\end{equation}
The nonlinear terms are estimated as
\begin{equation} \label{rotti}
| g_2|^2    \leq  |\varphi'(\xi)|^2 |u^1-u^2|^2  \leq
  \big(1+\|u^1\|_{L^{\infty}}^{2(q-2)} + \|u^2\|_{L^{\infty}}^{2(q-2)}\big) |u^1-u^2|^2,
\end{equation}
and
\begin{align}
 | g_3|^2   & = | \varphi'(\xi)(u^1-u^2) |^2  \leq \|\varphi'(\xi)\|_{L^\infty}^2 |u^1-u^2|^2 \nonumber \\ &\leq  c \big(1+\|u^1\|_{L^{\infty}}^{2(q-2)} + \|u^2\|_{L^{\infty}}^{2(q-2)}\big)|u^1-u^2|^2.   \label{rott}
\end{align}%& \leq c \big(1+(|u^1||A^{1/2} u^1|)^{q-2} (|u^2||A^{1/2} u^2|)^{q-2}\big) \big (m^2 + |A^{1/2} \bar p|^2 + |A^{1/2} \bar n|^2) \\ & \leq c\big(1+R_{\AA}^{q-2}\varpi^{-q/2}\big)\big[ m^2 +\| (\bar p,\pt \bar p)\|^2_{\widetilde X_t} +\| (\bar n,\pt \bar n)\|^2_{\widetilde X_t}\big].
Therefore, writing
$
u^1-u^2= m + \bar p + \bar n,
$
where
$(\bar p(t),\pt \bar p(t))=P(t,s)[z^1] -
P(t,s)[z^2],$ and exploiting \eqref{agmonfinal} to control $\|u^i\|_{L^\infty}$, we collect the above estimates into
\begin{align*} |g_2|^2+| g_3|^2 & \leq c\big(1+R_{\AA}^{q-2}\varpi^{-q/2+1}\big) \big(m^2 + | \bar p|^2 + |  \bar n|^2\big) \\ & \leq  c\big(1+R_{\AA}^{q-2}\varpi^{-q/2+1}\big)\big( m^2 + |A^{1/2} \bar p|^2 + | A^{1/2} \bar n|^2 \big)
\\
& \leq c\big(1+R_{\AA}^{q-2}\varpi^{-q/2}\big)\big[ m^2 +\| (\bar p,\pt \bar p)\|^2_{\widetilde X_t} +\| (\bar n,\pt \bar n)\|^2_{\widetilde X_t}\big]
\end{align*}
Therefore, setting  $
C_1(t)=(1+R_{\AA}^{q-2}\varpi^{-q/2}(t)\big)$, and observing that
$C_1(t)\leq C_1({t_0})$,  \eqref{SYS-TILDE0} turns into
\begin{align}
\label{SYS-TILDE00}   \ddt  \| (\bar n,\pt
\bar n)\|^2_{\widetilde X_t}& \leq C_1(t_0) \big[m^2 +\| (\bar
p,\pt \bar p)\|^2_{\widetilde X_t}+\| (\bar n,\pt \bar
n)\|^2_{\widetilde X_t}\big]\\& \nonumber \leq cC_1(t_0)\e^{C_0
t\star} \big[\|\hat z^1 - \hat z^2 \|_{\R^2}^2 + \| \tilde
z^1-\tilde z^2\|^2_{\widetilde X_s} + \| (\bar n,\pt \bar
n)\|^2_{\widetilde X_t}\big] ,
\end{align}
where we used \eqref{smooth:meanstar} and \eqref{expdecayfd} in
the last passage. We apply Gronwall's lemma on $(s,t)$, and obtain
the intermediate estimate
\begin{equation} \label{INTER-TILDE}
\| (\bar n(t),\pt \bar n(t))\|^2_{\widetilde X_t} \leq \e^{t_\star
C_2(t_0,t_\star)} \big[\|\hat z^1-\hat z^{2} \|_{\R^2}^2 +
\|\tilde z^1-\tilde z^{2} \|_{\widetilde X_{s}}^2], \qquad t \in [
s,s+t_\star],
\end{equation}
with $C_2(t_0,t)=\exp\big(cC_1(t_0)\e^{C_0 t\star}\big)$. Now, a
further multiplication of \eqref{SYS-TILDE} by $A \pt \bar n$
yields the differential inequality
\begin{equation}
\label{SYS-TILDE12} \ddt \| (\bar n,\pt \bar n)\|^2_{\widetilde
Y_t} + 2H |A^{1/2} \pt \bar n|^2 \leq  2 \l g_2+g_3,A \pt \bar n
\r \leq  c|A^{1/2}g_3|^2 + 2H |A^{1/2} \pt \bar n|^2,
\end{equation}
due to the fact that $g_2$ is independent of $x$.
For the remaining nonlinear term, we use the mean value theorem ($\varphi \in \C^2(\R)$) and write
$$
\partial_x g_3 =  \varphi'(u^1)\partial_x u^1 - \varphi'(u^2)\partial_x  u^2 = \varphi'(u^1) \partial_x (u^1-u^2) + \varphi''(\eta) (u^1-u^2) \partial_x u^2,
$$ with $\eta=\eta(x,t)$ between $u^1(x,t)$ and $u^2(x,t)$.
Thanks to \eqref{agmonfinal}, we obtain the controls
\begin{align*}
|\varphi'(u^1) \partial_x (u^1-u^2)|^2  &
\leq c\big(1+\|u^1\|_{L^\infty}^{2(q-2)}\big) \big[|A^{1/2} \bar
p|^2+|A^{1/2} \bar n|^2\big]  \\ & \leq c\big(1+R_{\AA}^{q-2}\varpi^{-q/2+1}\big) \big[|A^{1/2} \bar
p|^2+|A^{1/2} \bar n|^2\big]
\end{align*}
and, recalling \eqref{phisecond},
\begin{align*}
|\varphi''(\eta) (u^1-u^2) \partial_x   u^2|^2 & \leq c(1+ \|u_1\|^{2\vartheta}_{L^\infty}+ \|u_2\|^{2\vartheta}_{L^\infty})|u^1-u^2|^2 |A^{1/2} u^2 |^2 \\&  \leq  c\big(1+R_{\AA}^{1+\vartheta}\varpi^{-1-\frac{\vartheta}{2}}\big) \big(m^2 + |A^{1/2} \bar p|^2 + |A^{1/2}  \bar n|^2\big).
\end{align*}
Being $1+\vartheta \leq q-2,$ and $\tilde\vartheta:=1+\frac{\vartheta}{2} > \frac q2-1$, we summarize the above bounds into
\begin{align*}
  | A^{1/2}g_3|^2   & \leq c |\partial_x g_3|^2  \leq c\big(1+R_{\AA}^{q-2}\varpi^{-1-\tilde \vartheta} \big)\big[\| (\bar p,\pt \bar p)\|^2_{\widetilde X_t}+ \| (\bar n,\pt \bar n)\|^2_{\widetilde X_t} + m^2\big].
\end{align*}
%Recalling that
%\begin{equation*}
%(\bar n(t),\pt \bar n (t))=\big( S(t,s)z^1-S(t,s)z^2\big)+\big(\widehat S_{z^1}(t,s)-\widehat S_{z^2}(t,s) \big) + \big( P(t,s)[z^1]-P(t,s)[z^2] \big)
%\end{equation*}
%and using Theorem \ref{contdep},  \eqref{smooth:mean} and \eqref{smooth:decay}, we gain the bound
%\begin{align*}
%\| (\bar n,\pt \bar n)\|^2_{\widetilde X_t & \leq C_{}
%\end{align*}

%\vskip20mm
 Setting  $
C_{3}(t)=c\big(1+R_{\AA}^{q-2}\varpi^{-1-\tilde \vartheta}\big) $,
and again observing that $C_3(t)\leq C_3({t_0})$,
\eqref{SYS-TILDE12}  turns into
\begin{align}
\label{SYS-TILDE3}   \ddt \| (\bar n,\pt
\bar n)\|^2_{\widetilde Y_t}& \leq C_3({t_0}) \big[\| (\bar p,\pt
\bar p)\|^2_{\widetilde X_t}+ \| (\bar n,\pt \bar
n)\|^2_{\widetilde X_t} + m^2\big] \\ \nonumber & \leq
C_3({t_0}) \e^{C_2(t_0,t_\star) t_\star} \big[ \|\tilde z^1-\tilde
z^{2} \|_{\widetilde X_s}^2 +\|\hat z^1-\hat z^{2}
\|_{\R^2}^2\big].
\end{align}
Here we used again \eqref{smooth:mean}, \eqref{smooth:decay} and
\eqref{INTER-TILDE} in the last passage. Finally, we integrate on
$(s,s+t_\star)$. Being $\bar n(s)=0, \pt \bar n(s)=0$, we end up
with
$$
\| (\bar n(s+t_\star),\pt \bar n(s+t_\star))\|^2_{\widetilde
Y_{s+t_\star}} \leq  C_{t_0,t_\star}\big[ \|\tilde z^1-\tilde
z^{2} \|_{\widetilde X_s}^2 + \|\hat z^1-\hat z^{2}
\|_{\R^2}^2\big],
$$
which is \eqref{smoooth}, with $C_{t_0,t_\star}=t_\star C_3({t_0})
\e^{C_2(t_0,t_\star)}$.
\end{proof}

We are ready to complete the proof of Theorem \ref{fract:thm}. To
begin with, we identify each $z \in \A(t)$ with the pair $ (\hat
z, \tilde z) \in \R^2 \times \widetilde X_t$. We then define the
family
$$
\A_\star=\Big\{\A_\star(t) =\{(\hat z,\tilde z): z \in \A(t)\}, {t
\in \R}\Big\}, \qquad \A_\star(t)  \subset \R^2 \times \widetilde
X_t.
$$
From the properties of $\A$ (see Theorem~\ref{osc:attractor}),
each $\A_\star(t)$ is bounded in $\R^2 \times \widetilde Y_t$ and
hence compact in $\R^2 \times \widetilde X_t$. The bound
\eqref{boundX}  also guarantees that $\|\A_\star(t)\|_{\R^2 \times
\widetilde X_t} \leq Q_1$, for some positive $Q_1$ depending only
on $R_{\AA}$.

Now, fix $t_0 \in \R$ and set  $t_\star= 3H^{-1}\log 2$. Referring
to Lemma~\ref{lemma:frac}, for $k \in \N \cup\{0\}$, set
\begin{align*} &
W_k=\R^2 \times \widetilde X_{t_0-kt_\star}, && \|(\hat z,\tilde
z) \|_{W_k}^2= \|\hat z\|^2_{\R^2}+ \|\tilde z\|_{\widetilde
X_{t_0-kt_\star}}^2,
\\ & Z_k=\R^2 \times \widetilde Y_{t_0-kt_\star}, && \|(\hat z,\tilde z) \|_{Z_k}^2= \|\hat z\|^2_{\R^2}+ \|\tilde z\|_{\widetilde Y_{t_0-kt_\star}}^2.
\end{align*}
We easily have $Z_k \Subset W_k$ for each $k$, so that assumption
(i) is verified. Then, we point out that the linear isomorphism
$$
W_k  \ni z= \big(\hat z, (\tilde u, \tilde v)\big) \mapsto
J_k(z)=\big(\hat z, (\e^{-kt_\star H}\tilde u, \tilde v)\big) \in
W_0
$$
satisfies $J_k(B_{W_k}) = B_{W_0}, J_k(B_{Z_k})=B_{Z_0}$. Hence,
for $\eps >0$,
$$
\mathcal N_{\eps}\big(B_{Z_{k}},W_{k}\big) = N_{\eps} \big
(B_{Z_0},W_0) = \kappa_\eps < \infty,
$$
so that (ii) is satisfied as well. Finally, we consider the sets
$\B_k=\A_\star(t_0-kt_\star)$, for which Properties (iii)-(iv)
have been already verified above. For $k \geq 1$, define the maps
$U^k: \B_{k} \to \B_{k-1}$,
$$
z=(\hat z,\tilde z)  \mapsto  U^k\big((\hat z, \tilde z)\big) =
\big(\widehat S_{z}(t_0-(k-1)t_\star,t_0-kt_\star), \widetilde
S_{z}(t_0-(k-1)t_\star,t_0-kt_\star)\big) ,
$$
Due to the invariance of $\A$, the $U^k$ are well defined and
$U^k(\B_k)=\B_{k-1},$ which is (v). Regarding (vi),
referring to Lemma \ref{frac:dec}  we write  $U^k(z)=P^k(z) + N^k(z),$ with
\begin{align*} &
P^k(z) = \big(0,P(t_0-(k-1)t_\star,t_0-kt_\star) [z] \big),
\\ &  N^k(z)=\big(\widehat S_{z}(t_0-(k-1)t_\star,t_0-kt_\star),N(t_0-(k-1)t_\star,t_0-kt_\star) [z] \big).
\end{align*}
Writing
\eqref{smooth:decay} for $s=t_0-kt_\star$, we learn that
$$
\| P^k  (z^1) - P^k (z^2)\|_{W^{k-1}} \leq \e^{-Ht_\star}
\|z^1-z^2\|_{W_k} =\textstyle \frac18 \|z^1-z^2\|_{W_k} , \qquad
\forall z^1,z^2 \in \B_{k},
$$
so that \eqref{decay} is satisfied with $\varrho=\frac18$.
Moreover, collecting \eqref{smooth:mean} and \eqref{smoooth}
written for $s=t_0-kt_\star$, we have
$$
\| N^k  (z^1) - N^k (z^2)\|_{Z^{k-1}} \leq Q_{t_0}
\|z^1-z^2\|_{W_k}  \qquad \forall z^1,z^2 \in \B_{k},
$$
with $Q_{t_0}=(c\e^{C_0t_\star}+ C_{t_0, t_\star})^{1/2},$ which
is \eqref{smoothing}. Therefore, Lemma~\ref{lemma:frac} applies, yielding
\begin{equation} \label{boundfd}
\dim_{\R^2 \times \widetilde X_{t_0}} \A_\star(t_0)= \dim_{W_0}
\B_0 \leq    \log_2 \kappa_{(4Q_{t_0})^{-1}}.
\end{equation}
To complete the proof, it is enough to recall that, for fixed $t$,
$\R^2 \times \widetilde X_{t}$ and $X_{t}$ are isomorphic as
Banach spaces through the map
$$
\R^2 \times \widetilde X_{t} \ni (\hat z,\tilde z) \mapsto J_{t}
(\hat z, \tilde z) = \hat z + \tilde z \in X_t.
$$  Indeed, $J_t$ is clearly bijective and, if $\big(\hat z,\tilde z\big)=(\big(\hat u,\hat v),(\tilde u, \tilde v)\big), $
\begin{align*}
\|J_{t}(\hat z, \tilde z) \|_{X_{t}}^2& \leq 3 \Big[ \|\hat u+\tilde
u\|_{L^q}^2+e^{-2Ht}|A^{\frac12} \tilde u|^2 + |\hat v+\tilde
v|^2\Big] \leq 3\Big[ \|\hat z \|_{\R^2}^2 + \|\tilde u\|_{L^\infty}^2
  + \|\tilde z\|^2_{\widetilde X_{t}} \Big] \\ &
 \leq c \Big[\|\hat z \|_{\R^2}^2 +\lambda_1^{-1}|A^{1/2} \tilde u|^2+\|\tilde z\|^2_{\widetilde X_{t}} \Big]  \leq c(1+\lambda_1^{-1}\e^{2Ht})\|(\hat z,\tilde z) \|_{\R^2 \times \widetilde X_t}^2.
\end{align*}
Since $J_{t_0}(\A_\star(t_0))= \A(t_0)$ and Banach space
isomorphisms preserve  fractal dimension, \eqref{boundfd} implies
$$
\dim_{X_{t_0}}\A(t_0) =\dim_{\R^2 \times \widetilde X_{t_0}}
\A_\star(t_0)
 \leq   \log_2 \kappa_{(4Q_{t_0})^{-1}},
$$
as well, which is the statement of Theorem \ref{fract:thm}, with
$h_1(t_0)= \log_2 \kappa_{(4Q_{t_0})^{-1}}.$

\subsection*{Acknowledgments.}  This work was partially
supported by the National Science Foundation under the grants
NSF-DMS-0604235, NSF-DMS-0906440, and by the Research Fund of Indiana University.

\end{document}